    \pgfplotsset{compat=1.18}
\newcommand{\cD}{\mathcal{D}}
\newcommand{\cE}{\mathcal{E}}
\newcommand{\cF}{\mathcal{F}}
\newcommand{\cH}{\mathcal{H}}
\newcommand{\E}{\mathbb{E}}
\newcommand{\I}{\mathbb{I}}
\newcommand{\N}{\mathbb{N}}
\newcommand{\Pb}{\mathbb{P}}
\newcommand{\bbR}{\mathbb{R}}
 \newcommand{\e}{\varepsilon}
\newcommand{\lrb}[1]{\left(#1\right)}
\newcommand{\brb}[1]{\bigl(#1\bigr)}
\newcommand{\lsb}[1]{\left[#1\right]}
\newcommand{\bsb}[1]{\bigl[#1\bigr]}
\newcommand{\Bsb}[1]{\Bigl[#1\Bigr]}
\newcommand{\bbsb}[1]{\biggl[#1\biggr]}
\newcommand{\lcb}[1]{\left\{#1\right\}}
\newcommand{\bcb}[1]{\bigl\{#1\bigr\}}
\newcommand{\labs}[1]{\left\lvert#1\right\rvert}
\DeclareMathOperator*{\argmax}{argmax}
\newcommand{\dif}{\,\mathrm{d}}
\newcommand{\fracc}[2]{#1/#2}
\newcommand{\mypapertitle}{Trading Volume Maximization with Online Learning}
\DeclareSymbolFont{extraup}{U}{zavm}{m}{n}
\DeclareMathSymbol{\clubsuit}{\mathalpha}{extraup}{84}
\DeclareMathSymbol{\spadesuit}{\mathalpha}{extraup}{81}
\DeclareMathSymbol{\varheartsuit}{\mathalpha}{extraup}{86}
\DeclareMathSymbol{\vardiamondsuit}{\mathalpha}{extraup}{87}
\newcommand{\gft}{\mathrm{g}}
\newcommand{\GFT}{\mathrm{G}}
\newcommand{\ftpsi}{FE$\Psi$}
\newcommand{\fem}{FEM}
\newtheorem{theorem}{Theorem}
\newtheorem{lemma}{Lemma}
\title{\mypapertitle}
\author{%
    \textbf{Tommaso Cesari} \\
    University of Ottawa, Ottawa, Canada
    \\
    \textbf{Roberto Colomboni} \\
    Politecnico di Milano \& Universit\`a degli Studi di Milano, Milano, Italy
    \\
}
\begin{document}

\maketitle

\begin{abstract}
We explore brokerage between traders in an online learning framework.
At any round $t$, two traders meet to exchange an asset, provided the exchange is mutually beneficial.
The broker proposes a trading price, and each trader tries to sell their asset or buy the asset from the other party, depending on whether the price is higher or lower than their private valuations.
A trade happens if one trader is willing to sell and the other is willing to buy at the proposed price.

Previous work provided guidance to a broker aiming at enhancing traders' total earnings by maximizing the \emph{gain from trade}, defined as the sum of the traders' net utilities after each interaction.
In contrast, we investigate how the broker should behave to maximize the trading volume, i.e., the \emph{total number of trades}.

We model the traders' valuations as an i.i.d.\ process with an unknown distribution.

If the traders' valuations are revealed after each interaction (full-feedback), and the traders' valuations cumulative distribution function (cdf) is continuous, we provide an algorithm achieving logarithmic regret and show its optimality up to constant factors.

If only their willingness to sell or buy at the proposed price is revealed after each interaction ($2$-bit feedback), we provide an algorithm achieving poly-logarithmic regret when the traders' valuations cdf is Lipschitz and show that this rate is near-optimal.

We complement our results by analyzing the implications of dropping the regularity assumptions on the unknown traders' valuations cdf. 
If we drop the continuous cdf assumption, the regret rate degrades to $\Theta(\sqrt{T})$ in the full-feedback case, where $T$ is the time horizon. 
If we drop the Lipschitz cdf assumption, learning becomes impossible in the $2$-bit feedback case. 
\end{abstract}

\textbf{Keywords:} Regret minimization, Online learning, Two-sided markets

\section{Introduction}

In modern financial markets, Over-the-Counter (OTC) trading platforms have emerged as dynamic and decentralized hubs, offering diverse alternatives to traditional exchanges.
These markets have experienced remarkable growth, solidifying their central role in the global financial ecosystem: OTC asset trading in the US surpassed 50 trillion USD in value in 2020 \cite{weill2020search}, with an upward trend documented since 2016 \cite{bis2023}.

Brokers play a crucial role in OTC markets.
Beyond acting as intermediaries between traders, they utilize their understanding of the market to identify the optimal prices for assets.
Additionally, traders in these markets often respond to price changes: higher prices usually lead to selling, while lower prices typically result in buying \cite{sherstyuk2020randomized}.
This adaptability appears across various asset classes, including stocks, derivatives, art, collectibles, precious metals and minerals, energy commodities (like gas and oil), and digital currencies (cryptocurrencies) \cite{bolic2023online}.

Our study draws inspiration from recent research analyzing the bilateral trade problem from an online learning perspective \cite{cesa2021regret, azar2022alpha, cesa2023bilateral, cesa2023repeated, bolic2023online, bernasconi2023no}. In particular, we build on insights from \cite{bolic2023online}, which addresses the brokerage problem in OTC markets where traders may decide to buy or sell their assets depending on prevailing market conditions.

Previous work has entirely focused on scenarios where brokers aim at maximizing traders' earnings, measured by cumulative \emph{gain from trade}—--the total net utilities of traders over the entire sequence of interactions with the broker.
This could come at the expense of sacrificing potential trades where the net gain of the traders is small, which could be critical for traders making a living off of small margins.
From the broker's perspective, too, it might not be as beneficial to potentially miss out on traders' exchanges, given that, typically, brokers only earn when trades occur.
For this reason, our approach adopts a different perspective, aiming at providing strategies that boost the trading volume by maximizing the \emph{number of trades} within the entire broker-traders interaction sequence.

\subsection{Setting}
In what follows, for any two real numbers $a,b$, we denote their minimum by $a\wedge b$ and their maximum by $a \vee b$.
We now describe the brokerage online learning protocol.

For any time $t = 1,2,\dots$
\begin{itemize}
    \item[$\bullet$] Two traders arrive with their private valuations $V_{2t-1}$ and $V_{2t}$
    \item[$\bullet$] The broker proposes a trading price $P_t$
    \item[$\bullet$] If the price \(P_t\) is between the lowest valuation \(V_{2t-1} \wedge V_{2t}\) and the highest valuation \(V_{2t-1} \vee V_{2t}\)---meaning the trader with the lower valuation is willing to sell at \(P_t\) and the trader with the higher valuation is willing to buy at \(P_t\)---the transaction occurs with the higher-valuation trader purchasing the asset from the lower-valuation trader at the price \(P_t\)
    \item[$\bullet$] The broker receives some feedback
\end{itemize}

As commonly assumed in the existing bilateral trade literature, we assume valuations and prices belong to $[0,1]$.
While previous literature aims at maximizing the cumulative \emph{gain from trade}---defined as the sum of traders' net utilities in the whole interaction sequence---our objective is to maximize the \emph{number of trades}. 
Formally, for any $p,v_1,v_2 \in [0,1]$, our utility posting a price $p$ when the valuations of the traders are $v_1$ and $v_2$ is
\[
	\gft(p,v_1,v_2) \coloneqq \I \lcb{ v_1 \wedge v_2 \le p \le v_1 \vee v_2 } \;.
\]
The goal of the broker is to minimize the \emph{regret}, defined, for any time horizon $T\in\N$, as
\[
	R_T \coloneqq \sup_{p\in[0,1]} \E \lsb{ \sum_{t=1}^T \brb{\GFT_t(p) - \GFT_t(P_t)} }\;,
\]
where $\GFT_t(q) \coloneqq \gft(q, V_{2t-1}, V_t)$ for all $q\in[0,1]$ and $t \in \N$, and the expectation is taken over the randomness present in $(V_t)_{t\in \N}$ and the (possible) randomness used by the broker's algorithm to generate the prices $(P_t)_{t\in \N}$.

As in \cite{bolic2023online}, we assume that traders' valuations $V, V_1, V_2,\dots$ are generated i.i.d.\ from an \emph{unknown} distribution $\nu$---a practical assumption for large and stable markets.

Finally, we consider the following two different types of feedback commonly studied in the online learning bilateral trade literature:

\begin{itemize}
    \item \emph{Full-feedback.} At each round $t$, after having posted the price $P_t$, the broker has access to the traders' valuations $V_{2t-1}$ and $V_{2t}$.
    \item \emph{$2$-bit feedback.} At each round $t$, after having posted the price $P_t$, the broker has access to the indicator functions $\I\{ V_{2t-1} \le P_t\}$ and  $\I\{ V_{2t} \le P_t\}$.
\end{itemize}

The full-feedback model draws its motivation from \emph{direct revelation mechanisms}, where the traders disclose their valuations $V_{2t-1}$ and $V_{2t}$ before each round, but the mechanism has access to this information only after having posted the current bid $P_t$ \cite{cesa2021regret,cesa2023bilateral}.

The $2$-bit feedback model corresponds to \emph{posted price} mechanisms, where the broker has access only to the traders' willingness to buy or sell at the proposed posted price, and the valuations $V_{2t-1}$ and $V_{2t}$ are \emph{never} revealed.

\subsection{Overview of Our Contributions} 
In the full-feedback case, if the distribution $\nu$ of the traders' valuations has a \emph{continuous} cdf, we design an algorithm (\Cref{a:ftm}) suffering $O ( \ln T )$ regret in the time horizon $T$ (\Cref{t:ftm}), and we provide a matching lower bound (\Cref{t:lower-bound-full-M}).
We complement these results by showing that dropping the continuous cdf assumption leads to a worse regret rate of $\Omega(\sqrt{T})$ (\Cref{t:lower-full-info-sqrtT}), and we design an algorithm (\Cref{a:ftpsi}) achieving $O(\sqrt{T})$ regret (\Cref{t:ftpsi}).

In the $2$-bit feedback case, if the cdf of the traders' valuations is $M$-Lipschitz, we design an algorithm (\Cref{a:MBS}) achieving regret $ O \brb{ \ln(M T)\ln T }$ (\Cref{t:mbs}) where $T$ is the time horizon, and provide a near-matching lower bound $\Omega \brb{ \ln{(M T)} }$ (\Cref{t:lower-bound-2-bit-M}). 
We complement these results by showing that the problem becomes unlearnable if we drop the Lipschitzness assumption (\Cref{t:lower-bound-2-bit-linear}).

For a full summary of our results, see \Cref{table:results}.
{
\renewcommand{\arraystretch}{1.4}
\begin{table}[tbp]
\centering
\begin{tabular}{c|c|c|c|}
    \cline{2-4}
    & $M$-Lipschitz & Continuous & General                        \\ \hline
    \multicolumn{1}{|l|}{Full} &  \cellcolor[HTML]{00dfff} $\Omega(\ln T)$ \  Thm \ref{t:lower-bound-full-M} & \cellcolor[HTML]{00dfff} $O(\ln T)$ \  Thm \ref{t:ftm} & \cellcolor[HTML]{eeff00} $\Theta\brb{ \sqrt{T} }$ \  Thm \ref{t:lower-full-info-sqrtT}+\ref{t:ftpsi} 
    \\ \hline
    \multicolumn{1}{|l|}{$2$-Bit}   & \cellcolor[HTML]{35ff41} $O \brb{ \ln(M T)\ln T }$, $\Omega \brb{ \ln(M T) }$ \ Thm \ref{t:mbs}+\ref{t:lower-bound-2-bit-M} & \cellcolor[HTML]{FD6864} $\Omega(T)$ \ Thm \ref{t:lower-bound-2-bit-linear} & \cellcolor[HTML]{FD6864} $\Omega(T)$  
    \\ \hline
\end{tabular}
\caption{\footnotesize{Overview of all the regret regimes: $\ln T$ (cyan), $\ln(MT)$ (green), $\sqrt{T}$ (yellow), and $T$ (red), depending on the feedback (full or $2$-bit) and the assumption on the cdf ($M$-Lipschitz, continuous, or no assumptions).
}}
\label{table:results}
\end{table}
}

\subsection{Techniques and Challenges}
\label{s:tech-and-challenges}
Online learning with a continuous action domain and full-feedback is usually tackled by discretizing the action domain and then playing an optimal expert algorithm on the discretization, or by directly running exponential weights algorithms in the continuum \cite{MaillardM10, krichene2015hedge,cesa2024regret}.
These approaches require that the (expected) reward function is Lipschitz and lead to a regret rate of order $\tilde{O}(\sqrt{T})$.
In contrast, our expected reward function is \emph{not} Lipschitz in general.
To overcome this challenge, we leverage the specific structure of the problem by proving \Cref{l:mysticus}, which enables us to design an algorithm that achieves an exponentially better regret rate of $O(\ln T)$ even when the underlying cdf---and hence the associated reward function---is only continuous.
Moreover, we establish a matching $\Omega(\ln T)$ lower bound that, surprisingly, applies even when the reward function is Lipschitz, demonstrating that additional Lipschitz regularity beyond continuity does not contribute to faster rates in this setting.
This lower bound construction is particularly challenging because the shape of the function $p \mapsto \E\bsb{G_t(p)}$ can only be controlled indirectly through the traders' valuation distribution: to avoid exceedingly complex calculations, extra care is required in selecting appropriate instances. Even then, we needed a subtle and somewhat intricate Bayesian argument to obtain the lower bound.

In the $2$-bit feedback model, we remark that the available feedback is enough to reconstruct \emph{bandit} feedback.
Consequently, when the underlying cdf---and hence the expected reward function---is $M$-Lipschitz, a viable approach is to discretize the action space $[0,1]$ with $K$ uniformly spaced points and run an optimal bandit algorithm on the discretization.
This approach immediately yields a regret rate of order $O(\fracc{MT}{K} + \sqrt{KT})$.
This bound leads to a regret of order $O(M^{1/3} T^{2/3})$ by tuning $K \coloneqq \Theta(M^{2/3}T^{1/3})$ when $M$ is known to the learner, or of order $O(M T^{2/3})$ by tuning $K \coloneqq \Theta(T^{2/3})$ when the learner does not possess this knowledge. 
In contrast, we exploit the extra information provided by the $2$-bit feedback and the intuition provided by \Cref{l:mysticus} to devise a binary search algorithm achieving the exponentially better rate of $O\brb{\ln(MT)\ln T}$, with the additional feature of being oblivious to $M$.
Our corresponding lower bound shows that this rate is optimal (up to a $\ln T$ factor), demonstrating through an information-theoretic argument that some sort of binary search is essentially a necessary step for optimal learning.

\subsection{Related Work}
\label{s:related}

Since the pioneering work of Myerson and Satterthwaite \cite{myerson1983efficient}, the study of bilateral trade has grown significantly, particularly from a game-theoretic and approximation perspective \cite{Colini-Baldeschi16,Colini-Baldeschi17,BlumrosenM16,brustle2017approximating,colini2020approximately,babaioff2020bulow,dutting2021efficient,DengMSW21,kang2022fixed,archbold2023non}. For a comprehensive overview, refer to \cite{cesa2023bilateral}.

In recent years, the focus has expanded to include online learning settings for bilateral trade. Given their close relevance to our paper, we concentrate our discussion on these works.

In \cite{cesa2021regret,azar2022alpha,cesa2023bilateral,cesa2023repeated,bernasconi2023no,cesa2024regret}, the authors examined bilateral trade problems where the reward function is the \emph{gain from trade} and each trader has a fixed role as either a seller or a buyer.

In \cite{cesa2021regret}, the authors investigated a scenario where seller and buyer valuations form two distinct i.i.d.\ sequences.
In the full-feedback case, they achieved a regret bound of $\widetilde{O}(\sqrt{T})$, which was later refined to $O(\sqrt{T})$ in \cite{cesa2023bilateral}.
They also demonstrated a worst-case regret of $\Omega(\sqrt{T})$ even when sellers' and buyers' valuations are independent of each other and their cdfs are Lipschitz.
For the $2$-bit feedback scenario under i.i.d.\ valuations, \cite{cesa2021regret} proved that any algorithm must suffer linear regret, even under either the $M$-Lipschitz joint cdf assumption or the traders' valuation independence assumption.
However, when both conditions are simultaneously satisfied, they proposed an algorithm achieving a regret rate of $\widetilde{O}(M^{1/3} T^{2/3})$, later refined to $O(M^{1/3} T^{2/3})$ in \cite{cesa2023bilateral}.
\cite{cesa2021regret} also established a worst-case regret lower bound of $\Omega(T^{2/3})$ in this case, which, however, does not display any dependence on $M$. 

\cite{cesa2021regret,cesa2023bilateral} also showed that the adversarial bilateral trade problem is unlearnable even with full-feedback.
To achieve learnability beyond the i.i.d.\ case, \cite{cesa2023repeated,cesa2024regret} explored weakly budget-balanced mechanisms, allowing the broker to post different selling and buying prices as long as the buyer pays more than what the seller receives.
They demonstrated that learning can be achieved using weakly budget-balanced mechanisms in the $2$-bit feedback setting at a regret rate of $\widetilde{O}(M T^{3/4})$ when the joint seller/buyer cdf may vary over time but is $M$-Lipschitz. 
Furthermore, for the same setting, they provided a $\Omega(T^{3/4})$ matching lower bound in the time horizon, even when the process is required to be i.i.d., 
but their lower bound does not feature any dependence on $M$.
\cite{azar2022alpha} investigated whether learning is possible in the adversarial case by considering $\alpha$-regret, achieving $\widetilde{\Theta}(\sqrt{T})$ bounds for $2$-regret in full-feedback, and a $\widetilde{O}(T^{3/4})$ upper bound in $2$-bit feedback.
Following another direction, \cite{bernasconi2023no} explored globally budget-balanced mechanisms in the adversarial case, showing a $\Theta(\sqrt{T})$ regret rate in full-feedback and a $\widetilde{O}(T^{3/4})$ rate in the $2$-bit feedback case.

The closest to our work is \cite{bolic2023online}, where the authors studied the same i.i.d.\ version of our trading problem with flexible seller and buyer roles, but with the target reward function being the \emph{gain from trade}.
Under the $M$-Lipschitz cdf assumption, they obtained tight $\Theta(M \ln T)$ regret in the full-feedback case.
Surprisingly, in the same full-feedback case, but using our different reward function, we achieve a $\Theta(\ln T)$ regret rate even when the cdf is only continuous: in our case, the additional Lipschitz regularity does not offer any speedup once the continuity assumption is fulfilled.
Furthermore, under the $M$-Lipschitz cdf assumption, \cite{bolic2023online} proved a sharp rate of $\Theta(\sqrt{MT})$ in the $2$-bit feedback case.
Interestingly, using our different reward function, we achieve an exponentially faster upper bound of $O\brb{ \ln(MT)\ln T }$, which is tight up to a $\ln T$ factor.
If the Lipschitz cdf assumption is removed, the learning rate for both our problem and the one in \cite{bolic2023online} degrades to $\Theta(\sqrt{T})$ in the full-feedback case, and the problem becomes unlearnable in the $2$-bit feedback case.

\subsection{Limitations}
\label{s:limitations}

This paper investigates a setting where traders lack rigid buyer/seller roles. 
Our results do not apply to the case where there are two separate groups of traders, buyers and sellers, whose valuations are drawn from two distinct distributions. 
Technically speaking, the reason why our results do not apply to this different setting is that the key \Cref{l:mysticus} does not hold anymore.

Another conceivable limitation of this work is our focus on an i.i.d.\ setting.
However, in addition to i.i.d.\ settings being wildly popular for their ability to lead to solutions of great practical importance, this limitation is further mitigated by our impossibility result (\Cref{t:lower-bound-2-bit-linear}), which proves in particular that learning in an adversarial environment is impossible in $2$-bit feedback settings.

Lastly, we obtain fast regret rates only under regularity assumptions on the cdf of the traders' valuations. 
However, as we show in \Cref{s:beyond-regularity}, this issue is unavoidable: in the full-feedback setting, dropping the continuity assumption leads to an immediate exponential worsening of the minimax regret (from $\ln T$ to $\sqrt{T}$), while, in the $2$-bit feedback setting, dropping the Lipschitzness assumption makes learning entirely impossible, even if the cdf remains continuous.

\section{The Median Lemma}
In this section, we present the Median Lemma (\Cref{l:mysticus}), a simple but crucial result for what follows, and the key upon which our main algorithms are based.
At a high level, \Cref{l:mysticus} states that a broker who aims at maximizing the number of trades should post prices that are as close as possible to the median of the (unknown) traders' valuation distribution $\nu$, and the instantaneous regret which the broker incurs by playing any price is (proportional to) the square of the distance between the median and the price, if distances are measured with respect to the pseudo-metric induced by the traders' valuation cdf.
\begin{lemma}[The median lemma]
    \label{l:mysticus}
    If the cdf $F$ of $\nu$ is continuous, then, for any $t \in \N$ and any $p \in [0,1]$,
    \[
        \E\bsb{ \GFT_t(p) } = 2F(p)\brb{1-F(p)}
        \qquad\text{and}\qquad
        \frac{1}{2}-\E\bsb{ \GFT_t(p) } = 2\lrb{\frac{1}{2}-F(p)}^2\;.
    \]
    In particular, the function $p \mapsto \E\bsb{ \GFT_t(p) }$ is maximized at any point $m \in [0,1]$ such that $F(m) = \frac{1}{2}$.
\end{lemma}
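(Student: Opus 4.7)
The plan is to directly unpack the indicator defining $\GFT_t(p)$ and use independence of $V_{2t-1}, V_{2t}$ together with continuity of $F$; no nontrivial inequality or concentration argument is needed.

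First, I would split the event $\{V_{2t-1}\wedge V_{2t}\le p\le V_{2t-1}\vee V_{2t}\}$ according to which of the two valuations is the smaller one, writing it as the union $\{V_{2t-1}\le p\le V_{2t}\}\cup\{V_{2t}\le p\le V_{2t-1}\}$. These two events overlap only on $\{V_{2t-1}=V_{2t}=p\}$, which has probability zero because $F$ is continuous (so each marginal is atomless). Taking expectations and using inclusion--exclusion, together with the i.i.d.\ assumption, yields
\[
    \E\bsb{\GFT_t(p)} = \Pb(V_{2t-1}\le p)\Pb(V_{2t}\ge p) + \Pb(V_{2t}\le p)\Pb(V_{2t-1}\ge p) = 2F(p)\brb{1-F(p)},
\]
where the atomlessness of $\nu$ also lets me identify $\Pb(V\ge p)$ with $1-F(p)$ (the boundary point $V=p$ contributes nothing).

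Next, the second identity is a purely algebraic manipulation: expanding,
\[
    \tfrac{1}{2} - 2F(p)\brb{1-F(p)} = \tfrac{1}{2} - 2F(p) + 2F(p)^2 = 2\lrb{F(p)-\tfrac{1}{2}}^2,
\]
which gives the claimed equality. Finally, for the ``in particular'' clause, the second identity shows that $p \mapsto \E\bsb{\GFT_t(p)}$ is maximized exactly at those $p$ where $F(p)=1/2$; such a point $m\in[0,1]$ exists because $F$ is continuous with $F(0)\le 1/2\le F(1)$, so the intermediate value theorem supplies one.

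There is no real obstacle here; the only subtle point is the standard issue of weak versus strict inequalities in the events defining a trade, which is precisely what continuity of $F$ is invoked to dispatch.
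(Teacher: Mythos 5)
Your proof is correct and follows essentially the same route as the paper's: decompose the trade event according to which valuation is smaller, use independence and continuity (atomlessness) of $F$ to get $2F(p)\brb{1-F(p)}$, and finish with the algebraic identity $\tfrac14-F(p)\brb{1-F(p)}=\brb{\tfrac12-F(p)}^2$. The only cosmetic difference is that the paper makes the two sub-events disjoint outright (using a strict inequality in one of them) and invokes additivity, whereas you keep both non-strict and dispose of the overlap $\{V_{2t-1}=V_{2t}=p\}$ via inclusion--exclusion; both steps rely on continuity in the same way.
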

Before presenting the proof of \Cref{l:mysticus}, we just remark that points $m \in [0,1]$ satisfying $F(m) = 1/2$ do exist by the intermediate value theorem, because $F(0)=0$, $F(1)=1$, and $F$ is continuous.

\begin{proof}
For each $t \in \N$ and each $p \in [0,1]$, we have that
\begin{align*}
    \E\bsb{\GFT_t(p)}
&=
    \Pb\Bsb{\bcb{V_{2t-1} \le p < V_{2t}} \cup \bcb{V_{2t} \le p \le V_{2t-1}}}
\\
&=
    \Pb\bsb{ V_{2t-1} \le p} \Pb\bsb{ p < V_{2t}}
    +
    \Pb\bsb{ V_{2t} \le p} \Pb\bsb{ p \le V_{2t-1}}
=
    2F(p)\brb{1-F(p)}\;,
\end{align*}
where the second equality follows from additivity and independence, while in the last equality we leveraged the continuity of $F$ to obtain $\Pb[p \le V_{2t-1}] = \Pb[p < V_{2t-1}] = 1 - F(p)$.
To conclude, it is enough to note that, for each $p \in [0,1]$ it holds that
$
    \nicefrac{1}{4} - F(p)\brb{1-F(p)}
=
    \brb{\nicefrac{1}{2} - F(p)}^2
$.
\end{proof}

Before concluding this section, we remark that \Cref{l:mysticus} plays an analogous role to the one played by the Approximation and Representation Lemmas in \cite{bolic2023online}, which together implied that a broker aiming at maximizing the \emph{gain from trade} should post prices that are close as possible to the \emph{mean} of the traders' valuations.
It is also interesting to note that in order to control the instantaneous regret by a squared distance, \cite{bolic2023online} needed to impose the Lipschitzness of the traders' valuation cdf, which is an unnecessary condition to obtain a corresponding result in our problem.

\section{Full-Feedback}

We now investigate how the broker should behave to maximize the number of trades in the full-feedback case where after each interaction the traders' valuations are disclosed.
We begin by studying the full-feedback case under the continuous cdf assumption.
In this case, taking inspiration from \Cref{l:mysticus}, a natural strategy is to play the \emph{empirical median}, which leads to \cref{a:ftm}.
\begin{algorithm}
Post $P_1 \coloneqq 1/2$ and receive feedback $V_{1}$, $V_{2}$\;
\For
{%
    time $t=2,3,\ldots$
}
{
    Post $P_t \coloneqq \frac{1}{2} \lrb{V^{(t-1)}_{2(t-1)}+V^{(t)}_{2(t-1)}}$, where $V^{(1)}_{2(t-1)},\dots,V^{(2(t-1))}_{2(t-1)}$ are the order statistics of the observed sample $V_1,\dots,V_{2(t-1)}$, and receive feedback $V_{2t-1}$, $V_{2t}$\;
}
\caption{Follow the Empirical Median (FEM)}
\label{a:ftm}
\end{algorithm}

The next theorem leverages \Cref{l:mysticus} to show that \Cref{a:ftm} suffer regret $O(\ln T)$ when the traders' valuation cdf is continuous.
\begin{theorem}
\label{t:ftm}
If $\nu$ has a continuous cdf $F$, the regret of \fem{} satisfies, for all time horizons $T\in \N$,
\[
    R_T
\le
    \frac{1}{2} + \frac{\pi}{2} \brb{ 1 + \ln (T-1) } \;.
\]
\end{theorem}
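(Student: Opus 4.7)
The plan is to invoke the Median Lemma (\Cref{l:mysticus}) to recast the regret as a cumulative expected squared deviation of $F(P_t)$ from $1/2$, then bound each term via the concentration of the empirical median. Since \Cref{l:mysticus} gives $\E[\GFT_t(p)]=2F(p)(1-F(p))$ for deterministic $p$, with maximum $1/2$ attained at any median of $\nu$, and since $P_t$ depends only on the past (hence is independent of $V_{2t-1},V_{2t}$), conditioning on $P_t$ and applying the pointwise identity $1/2 - 2F(p)(1-F(p)) = 2(1/2-F(p))^2$ yields
\[
    R_T = \sum_{t=1}^T 2\,\E\bsb{(1/2 - F(P_t))^2}.
\]
It then suffices to bound each summand. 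The initial round is trivial: $P_1=1/2$ is deterministic and $F(1/2)\in[0,1]$, so that term contributes at most $2\cdot(1/2)^2 = 1/2$.

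For $t\ge 2$, I would apply the probability integral transform: continuity of $F$ ensures that the random variables $U_i := F(V_i)$ are i.i.d.\ uniform on $[0,1]$, and monotonicity of $F$ makes it commute with taking order statistics, so $F(V^{(k)}_n)$ is distributed as $U^{(k)}_n$, i.e.\ as a $\mathrm{Beta}(k,\,n-k+1)$ random variable. Since $P_t$ is the midpoint of the two central order statistics of the $n=2(t-1)$ samples, monotonicity of $F$ sandwiches $F(P_t)$ between $U^{(t-1)}_n$ and $U^{(t)}_n$, yielding
\[
    (1/2 - F(P_t))^2 \le \max\bcb{(1/2 - U^{(t-1)}_n)^2,\ (1/2 - U^{(t)}_n)^2}.
\]
Taking expectations, each term on the right is a second central moment of a $\mathrm{Beta}(t-1,t)$ (resp.\ $\mathrm{Beta}(t,t-1)$) random variable around the midpoint $1/2$, which I would bound by $\pi/(4(t-1))$ using the classical closed-form Beta moment formulas sharpened via Stirling's approximation for the central binomial coefficient $\binom{n}{n/2}$---this is the step in which the factor $\pi$ enters. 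The instantaneous regret at round $t\ge 2$ is then at most $\pi/(2(t-1))$.

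The final step is to sum $\sum_{t=2}^T \pi/(2(t-1)) = (\pi/2)\,H_{T-1} \le (\pi/2)(1+\ln(T-1))$ via the standard harmonic bound, which combined with the $1/2$ from round $t=1$ yields the claimed estimate. The main obstacle I expect is the sharp Beta-moment estimate with the right constant: since $F$ is only continuous (not Lipschitz), one cannot convert deviations $|P_t-m|$ from the population median $m$ into deviations $|F(P_t)-1/2|$ via a modulus of continuity, so the probability integral transform is essential; moreover, pinning down the precise $\pi/4$ coefficient requires a tight estimate of $\binom{n}{n/2}$ or an equivalent Beta identity rather than the crude variance-plus-bias decomposition.
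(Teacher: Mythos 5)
Your proposal is correct and reaches the stated bound, but by a genuinely different route from the paper. The paper also starts from \Cref{l:mysticus} and the decomposition $R_T=\frac12+2\sum_{t\ge2}\E\bsb{(1/2-F(P_t))^2}$, but then splits each summand according to whether $P_t$ falls left or right of a population median, applies Minkowski's integral inequality to pull the square root inside, bounds $\Pb[x\le P_t]$ by Hoeffding's inequality, and evaluates a Gaussian integral---which is where its factor $\pi$ comes from. You instead push everything through the probability integral transform: $F(V^{(k)}_{2(t-1)})\sim\mathrm{Beta}(k,2(t-1)-k+1)$, and since $V^{(t-1)}\le P_t\le V^{(t)}$, monotonicity of $F$ sandwiches $F(P_t)$ between the two central uniform order statistics. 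This is sound, and in fact the step you describe as the "main obstacle" is easier than you anticipate: the "crude" variance-plus-bias decomposition is an exact identity here and gives $\E\bsb{(1/2-U^{(t-1)}_{2(t-1)})^2}=\frac{t-1}{2(2t-1)^2}+\frac{1}{4(2t-1)^2}=\frac{1}{4(2t-1)}$ (and the same for $U^{(t)}$ by symmetry), so no Stirling estimate of $\binom{n}{n/2}$ is needed and $\pi$ never has to appear. One point to tighten in the write-up: from your pointwise sandwich you control $\E[\max(A,B)]$, which is \emph{not} bounded by the maximum of the two individual bounds; you must use $\E[\max(A,B)]\le\E[A]+\E[B]=\frac{1}{2(2t-1)}$. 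Since $\frac{1}{2(2t-1)}\le\frac{1}{4(t-1)}\le\frac{\pi}{4(t-1)}$, the instantaneous regret is at most $\frac{2}{2t-1}\cdot\frac12<\frac{\pi}{2(t-1)}$ and the claimed bound follows with room to spare---indeed your route yields the sharper constant $\sum_{t=2}^T\frac{1}{2t-1}$ in place of $\frac{\pi}{2}\sum_{t=2}^T\frac{1}{t-1}$. What each approach buys: the paper's Hoeffding-based argument is more robust (it only needs a one-sided tail bound on where the empirical median can land), while your exact order-statistics computation is more elementary, avoids Minkowski entirely, and gives a tighter constant, at the cost of relying on the specific Beta law of uniform order statistics.
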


\begin{proof}
Without loss of generality, we can (and do!) assume that $T\ge 2$.
Then, we have
\begin{align*}
    R_{T}
\le
    \frac12+\max_{p \in [0,1]}
    \E\lsb{\sum_{t=2}^{T}\GFT_t(p)} -\E\lsb{\sum_{t=2}^{T}\GFT_t(P_t)}   
=
    \frac12+2\cdot\sum_{t=2}^{T}\E\lsb{\lrb{\frac{1}{2}-F(P_t)}^2}   
\end{align*}
Now, let $m \in [0,1]$ be such that $F(m) = 1/2$, and let $V$ be a random variable whose distribution is $\nu$, independent of $V_1,V_2,\dots$.
Then, for any $t \in \N$ such that $t \ge 2$ we have
\begin{align*}
    \E\lsb{\lrb{\frac{1}{2}-F(P_t)}^2}
&
=
    \E\lsb{\lrb{\Pb[m \le V \le P_t \mid P_t]}^2}
    +
    \E\lsb{\lrb{\Pb[ P_t \le V \le m\mid P_t]}^2}
\eqqcolon
    (I) + (II)\;.
\end{align*}
Now, for the term $(I)$, leveraging the fact that $V$ and $P_t$ are independent of each other, together with the Minkowski's integral inequality (see, e.g., \cite[Appendix A.1]{stein1970singular}), we have:
\begin{align*}
    \sqrt{(I)}
&=
    \sqrt{\E\lsb{\lrb{\E\bsb{\I\{m \le V \le P_t \}\mid P_t}}^2}}
\le
    \E \lsb{ \sqrt{ \E \lsb{ \brb{\I\{m \le V \le P_t\}}^2 \mid V }}}
\\
&
=
    \E \lsb{ \sqrt{ \Pb[m \le V \le P_t \mid V] }}
=
    \int_{[m,1]} \sqrt{\Pb[x \le P_t]} \dif \Pb_{V}(x)
=
    \int_{[m,1]} \sqrt{\Pb[x \le P_t]} \dif \nu(x)
=
    (\star)
\end{align*}
For each $x \in [0,1]$ and for any $s \in \N$, let $B_s(x)\coloneq \I\{x \le V_s\}$, and notice that $B_1(x),B_2(x),\dots$ is an i.i.d.\ sequence of Bernoulli random variables of parameter $1-F(x)$. 
Let $V^{(1)}_{2(t-1)},\dots,V^{(2(t-1))}_{2(t-1)}$ be the order statistics of the observed sample $V_1,\dots,V_{2(t-1)}$.
For any $x \in [m,1]$, observing that $F(x)-\frac{1}{2} \ge 0$ and $\Pb[x \le P_t] \le \Pb\lsb{x \le V_{2(t-1)}^{(t)}} \le \Pb\lsb{\sum_{s=1}^{2(t-1)} B_s(x) \ge t-1}$, we can leverage Hoeffding's inequality to obtain
\begin{align*}
&
    \Pb[x \le P_t]
\le
    \Pb\lsb{\sum_{s=1}^{2(t-1)} B_s(x) \ge t-1}
=
    \Pb\lsb{\sum_{s=1}^{2(t-1)} \frac{B_s(x)}{2(t-1)}  - (1-F(x)) \ge \frac{t-1}{2(t-1)} - (1-F(x))}
\\
&\qquad=
    \Pb\lsb{\sum_{s=1}^{2(t-1)} \frac{B_s(x)}{2(t-1)} - (1-F(x)) \ge F(x)-\frac{1}{2}}
\le
    e^{-4(t-1)\lrb{F(x)-\frac{1}{2}}^2}
=
    e^{-4(t-1)\lrb{\nu[0,x]-\frac{1}{2}}^2}\;,
\end{align*}
from which, by the change of variable formula \cite[Proposition 4.10, Chapter 1]{revuz2013continuous}, it follows also that
\begin{align*}
    (\star)
&
\le
    \int_{[m,1]} \sqrt{\exp\lrb{-4(t-1)\lrb{\nu\bsb{[0,x]}-\frac{1}{2}}^2}} \dif \nu(x)
=
    \int_{1/2}^1 \exp\lrb{-2(t-1)\lrb{\frac{1}{2}-u}^2} \dif u
\\
&
\le
    \frac{1}{\sqrt{2(t-1)}}\int_{0}^\infty \exp\lrb{-r^2} \dif r
=
    \frac{\sqrt{\pi}}{2\sqrt{2}} \cdot \frac{1}{\sqrt{t-1}} \;,
\end{align*}
and hence $(I) \le \frac{\pi}{8(t-1)}$.
Analogously, we can prove that $(II) \le \frac{\pi}{8(t-1)}$.
Hence,
\[
    R_T
\le
    \frac{1}{2} + \frac{\pi}{2} \cdot \sum_{t=2}^{T} \frac{1}{t-1}
=
    \frac{1}{2} + \frac{\pi}{2} + \frac{\pi}{2} \cdot \sum_{t=2}^{T-1} \int_{t-1}^{t}\frac{1}{t} \dif s
\le
    \frac{1}{2} + \frac{\pi}{2} + \frac{\pi}{2} \cdot \int_{1}^{T-1}\frac{1}{s} \dif s
=
    \frac{1}{2} + \frac{\pi}{2} \brb{1 + \ln (T-1)}\;.  \qedhere
\]

\end{proof}

We now establish the optimality of \fem{} by demonstrating a matching $\Omega(\ln T)$ regret lower bound.
We remark that this result holds even when competing against underlying distributions that have a $2$-Lipschitz cdf, thus proving the optimality of \fem{} even under the Lipschitz cdf assumption.
\begin{theorem}
    \label{t:lower-bound-full-M}
    There exist two numerical constants $c_1$ and $c_2$ such that, for any time horizon $T \ge c_2$, the worst-case regret of any full-feedback algorithm satisfies
    \[
        \sup_{\nu\in\cD_2} R_T^\nu
    \ge
       c_1 \ln T \;,
    \]
    where $R_T^\nu$ is the regret at time $T$ of the algorithm when the i.i.d.\ sequence of traders' valuations follows the distribution $\nu$, and $\cD_2$ is the set of all distributions $\nu$ that admit a $2$-Lipschitz cdf.
\end{theorem}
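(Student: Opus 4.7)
The plan is to combine \Cref{l:mysticus} with a van-Trees-type Bayesian lower bound on a smooth one-parameter family of $2$-Lipschitz distributions. First, I would fix a smooth function $g \colon [0,1] \to \bbR$ with $\operatorname{supp}(g) \s (\fracc{1}{4},\fracc{3}{4})$, $\int_0^1 g = 0$, $A \coloneqq \int_0^{1/2} g \ne 0$, and $\|g\|_\infty \le \fracc{1}{2}$; then for $\theta$ ranging over a small interval $\Theta = [-\theta_0,\theta_0]$, I would define $\nu_\theta$ to have density $f_\theta(x) \coloneqq 1 + \theta g(x)$ on $[0,1]$. Since $f_\theta \in [\fracc{1}{2},\fracc{3}{2}]$, the cdf $F_\theta$ is $2$-Lipschitz and $\nu_\theta \in \cD_2$. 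Writing $m_\theta$ for the unique median of $\nu_\theta$, the implicit function theorem applied to $F_\theta(m_\theta) = \fracc{1}{2}$ shows that $\theta \mapsto m_\theta$ is smooth with $m_0' = -A \ne 0$, so $m_\theta$ is a smooth, strictly monotone functional of $\theta$ on $\Theta$ provided $\theta_0$ is chosen small enough.

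Next, I would use \Cref{l:mysticus} together with the pointwise bound $f_\theta \ge \fracc{1}{2}$ to convert regret into a squared estimation error:
\[
    2 \lrb{\tfrac{1}{2} - F_\theta(P_t)}^2
    =
    2 \brb{F_\theta(m_\theta) - F_\theta(P_t)}^2
    \ge
    \tfrac{1}{2} (P_t - m_\theta)^2,
\]
hence $R_T^{\nu_\theta} \ge \tfrac{1}{2} \sum_{t=1}^T \E\bsb{(P_t - m_\theta)^2}$. I would then place a smooth prior $\pi$ compactly supported in the interior of $\Theta$ and view $P_t$ as an estimator of $m_\theta$ based on the $2(t-1)$ i.i.d.\ samples $V_1,\dots,V_{2(t-1)}$ that full-feedback reveals before round $t$. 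Since $f_\theta$ is uniformly bounded above and away from $0$ on $\Theta$, the Fisher information $I(\theta) = \int_0^1 g(x)^2/f_\theta(x) \dif x$ is uniformly bounded on $\Theta$, and the van Trees inequality applied to the smooth target $\theta \mapsto m_\theta$ yields, for every $t \ge 2$,
\[
    \E_{\theta \sim \pi}\bsb{\E[(P_t - m_\theta)^2]}
    \ge
    \frac{\brb{\E_\pi[m_\theta']}^2}{I(\pi) + 2(t-1)\,\E_\pi[I(\theta)]}
    \ge
    \frac{c_1}{t},
\]
for a constant $c_1 > 0$ depending only on $g$ and $\pi$.

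Finally, averaging the regret inequality against $\pi$ and summing the displayed Bayesian bound would give
\[
    \sup_{\nu \in \cD_2} R_T^\nu
    \ge
    \E_{\theta \sim \pi} R_T^{\nu_\theta}
    \ge
    \tfrac{1}{2} \sum_{t=2}^T \frac{c_1}{t}
    \ge
    c_1' \ln T
\]
once $T$ is large enough, as required. The main obstacle will be executing the van Trees step cleanly: I have to verify that $\theta \mapsto \sqrt{f_\theta}$ is $L^2$-differentiable with finite Fisher information (straightforward because $f_\theta$ is bounded away from $0$ and $\infty$), that $m_\theta$ is a smooth functional of $\theta$ with non-vanishing derivative on $\operatorname{supp}(\pi)$ (built in via $A \ne 0$), and that the denominator is driven by the actual observed sample count $2(t-1)$ rather than $t$. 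None of these items is deep in isolation, but together they realise the ``subtle and intricate Bayesian argument'' alluded to in \Cref{s:tech-and-challenges}.
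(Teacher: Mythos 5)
Your proposal is correct, and it reaches the $\Omega(\ln T)$ bound by a genuinely different route than the paper. Both proofs share the same skeleton: restrict to deterministic algorithms, put a prior on a one-parameter family of $2$-Lipschitz instances, use \Cref{l:mysticus} to rewrite the instantaneous regret as $2\brb{1/2-F_\theta(P_t)}^2$, lower bound this by a squared distance between $P_t$ and the median using a pointwise lower bound on the density, and then show that no estimator of the median from $2(t-1)$ samples can beat mean squared error $\Omega(1/t)$. Where you diverge is in how that last estimation lower bound is obtained. The paper uses a piecewise-constant family ($f_\e = 2\e$ on $[0,1/8]$, $1$ on $(1/8,7/8)$, $2(1-\e)$ on $[7/8,1]$) engineered so that each sample can be represented through an auxiliary Bernoulli($\e$) variable carrying all the information about $\e$; with a uniform prior, the posterior of $\e$ is an explicit Beta distribution, the Bayes-optimal predictor is written in closed form, and its risk is bounded by a direct variance computation (plus a projection of $P_t$ onto $[1/8,7/8]$, where the cdf has slope exactly $1$). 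You instead take a smooth perturbation family $f_\theta = 1+\theta g$ and invoke the van Trees inequality, with the non-degeneracy of the target coming from $m_0' = -\int_0^{1/2} g \ne 0$. Your approach buys a shorter, more transparent argument that avoids the bespoke representation \eqref{e:representation_of_V} and the explicit posterior calculus, at the price of importing van Trees and verifying its hypotheses (quadratic-mean differentiability, finite $I(\pi)$, smoothness and non-vanishing derivative of $\theta\mapsto m_\theta$ on $\operatorname{supp}(\pi)$) — all of which do hold for your family, since $f_\theta$ is bounded away from $0$ and $\infty$ and is affine in $\theta$. The paper's approach is fully self-contained and elementary by comparison. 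The only points worth making explicit in a full write-up are (i) that \Cref{l:mysticus} is applied to the random price $P_t$ via conditioning on the past (the paper's ``Freezing Lemma'' step), and (ii) that for randomized algorithms one either conditions on the internal randomness before applying van Trees or notes, as the paper does, that the i.i.d.\ minimax regret is attained by deterministic algorithms; neither is a real obstacle.
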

Due to space constraints, we defer the (long and technical) proof of this result to \Cref{s:appe-lower-bound-full-M} and only present a short, high-level sketch here.
\begin{figure*}
    \def\scaleFactor{7/8}
    \def\scaleFactorOne{1/10}
    \def\scaleFactorTwo{2.5/5}
    \centering
    \begin{tikzpicture}[scale=4.2]

    \draw[lightgray, very thin] 
        (1/8,0.6 * \scaleFactor) -- (1/8,0)
        (7/8,0.6 * \scaleFactor) -- (7/8,0)
        (1,0.31 * \scaleFactor) -- (1,0)
        (0, 0.6 * \scaleFactor) -- (1/8,0.6 * \scaleFactor)
        (0, 0.31 * \scaleFactor) -- (7/8,0.31 * \scaleFactor)
    ;

    \draw[lightgray, very thin] 
        (-0.16,0.14 * \scaleFactorTwo) -- (0,0.14 * \scaleFactorTwo)
        (-0.16, 0.31 * \scaleFactorTwo) -- (7/8,0.31 * \scaleFactorTwo)
    ;

    \draw (0, 0.14 * \scaleFactor) -- (1/8, 0.14 * \scaleFactor);
    \draw (7/8, 0.31 * \scaleFactor) -- (1, 0.31 * \scaleFactor);
    \draw (1/8, 0.6 * \scaleFactor) -- (7/8, 0.6 * \scaleFactor);

    \draw[red] (0, 0.14 * \scaleFactorTwo) -- (1/8, 0.14 * \scaleFactorTwo);
    \draw[red] (7/8, 0.31 * \scaleFactorTwo) -- (1, 0.31 * \scaleFactorTwo);

    \draw (0,0) -- (0,-0.02) node[below]{$0$}
        (1/8,0) -- (1/8,-0.02) node[below]{$\nicefrac18$}
        (7/8,0) -- (7/8,-0.02) node[below]{$\nicefrac78$}
        (1,0) -- (1,-0.02) node[below]{$1$}
        (0,0.14 * \scaleFactor) -- (-0.02,0.14 * \scaleFactor) node[left]{$2\e$}
        (0,0.31 * \scaleFactor) -- (-0.02,0.31 * \scaleFactor) node[left]{$2(1-\e)$}
        (0,0.6 * \scaleFactor) -- (-0.02,0.6 * \scaleFactor) node[left]{$1$}
    ;

    \draw (0,0.14 * \scaleFactorTwo) -- (-0.02,0.14 * \scaleFactorTwo)
        (0,0.31 * \scaleFactorTwo) -- (-0.02,0.31 * \scaleFactorTwo)
    ;
    \draw
        (-0.15,0.14 * \scaleFactorTwo) node[left]{\textcolor{red}{$2\e'$}}
        (-0.15,0.31 * \scaleFactorTwo) node[left]{\textcolor{red}{$2(1-\e')$}}
    ;
    
    \draw[<->] (0,0.6) -- (0,0) -- (1.1,0);

    \end{tikzpicture}
    \hspace{16pt}
    \begin{tikzpicture}[scale=4.2]

    \draw[lightgray, very thin] 
        (1/8,
            {
                2 *
                (
                    2 * 0.31 * \scaleFactor * 1/8 
                )
                *
                (
                    1
                    -
                    (
                        2 * 0.31 * \scaleFactor * 1/8 
                    )
                )
            }
        ) 
        -- (1/8,0)
        (7/8,
            { 
                2 *
                (
                    2 * 0.31 * \scaleFactorOne - 1 - 2 * ( 0.31 * \scaleFactorOne - 1 ) * 7/8
                )
                *
                (
                    1
                    -
                    (
                        2 * 0.31 * \scaleFactorOne - 1 - 2 * ( 0.31 * \scaleFactorOne - 1 ) * 7/8
                    )
                )
            }
        ) -- (7/8,0)
        ( { ( 5 - 2 * 0.31 * \scaleFactor ) / 8 },
            { 
                2 *
                (
                    2 * 0.31 * \scaleFactor * ( ( 5 - 2 * 0.31 * \scaleFactor ) / 8 ) * ( ( ( 5 - 2 * 0.31 * \scaleFactor ) / 8 ) <= 1/8 )
                    +
                    ( (2 * 0.31 * \scaleFactor - 1)/8 + ( ( 5 - 2 * 0.31 * \scaleFactor ) / 8 ) ) * ( 1/8 < ( ( 5 - 2 * 0.31 * \scaleFactor ) / 8 ) ) * ( ( ( 5 - 2 * 0.31 * \scaleFactor ) / 8 ) < 7/8 )
                    +
                    ( 2 * 0.31 * \scaleFactor - 1 - 2 * ( 0.31 * \scaleFactor - 1 ) * ( ( 5 - 2 * 0.31 * \scaleFactor ) / 8 ) ) * ( ( ( 5 - 2 * 0.31 * \scaleFactor ) / 8 ) >= 7/8 )
                )
                *
                (
                    1
                    -
                    (
                        2 * 0.31 * \scaleFactor * ( ( 5 - 2 * 0.31 * \scaleFactor ) / 8 ) * ( ( ( 5 - 2 * 0.31 * \scaleFactor ) / 8 ) <= 1/8 )
                        +
                        ( (2 * 0.31 * \scaleFactor - 1)/8 + ( ( 5 - 2 * 0.31 * \scaleFactor ) / 8 ) ) * ( 1/8 < ( ( 5 - 2 * 0.31 * \scaleFactor ) / 8 ) ) * ( ( ( 5 - 2 * 0.31 * \scaleFactor ) / 8 ) < 7/8 )
                        +
                        ( 2 * 0.31 * \scaleFactor - 1 - 2 * ( 0.31 * \scaleFactor - 1 ) * ( ( 5 - 2 * 0.31 * \scaleFactor ) / 8 ) ) * ( ( ( 5 - 2 * 0.31 * \scaleFactor ) / 8 ) >= 7/8 )
                    )
                )
            }
        )
        -- ( { ( 5 - 2 * 0.31 * \scaleFactor ) / 8 }, 0 )
        ( { ( 5 - 2 * 0.31 * \scaleFactorOne ) / 8 },
            { 
                2 *
                (
                    2 * 0.31 * \scaleFactorOne * ( ( 5 - 2 * 0.31 * \scaleFactorOne ) / 8 ) * ( ( ( 5 - 2 * 0.31 * \scaleFactorOne ) / 8 ) <= 1/8 )
                    +
                    ( (2 * 0.31 * \scaleFactorOne - 1)/8 + ( ( 5 - 2 * 0.31 * \scaleFactorOne ) / 8 ) ) * ( 1/8 < ( ( 5 - 2 * 0.31 * \scaleFactorOne ) / 8 ) ) * ( ( ( 5 - 2 * 0.31 * \scaleFactorOne ) / 8 ) < 7/8 )
                    +
                    ( 2 * 0.31 * \scaleFactorOne - 1 - 2 * ( 0.31 * \scaleFactorOne - 1 ) * ( ( 5 - 2 * 0.31 * \scaleFactorOne ) / 8 ) ) * ( ( ( 5 - 2 * 0.31 * \scaleFactorOne ) / 8 ) >= 7/8 )
                )
                *
                (
                    1
                    -
                    (
                        2 * 0.31 * \scaleFactorOne * ( ( 5 - 2 * 0.31 * \scaleFactorOne ) / 8 ) * ( ( ( 5 - 2 * 0.31 * \scaleFactorOne ) / 8 ) <= 1/8 )
                        +
                        ( (2 * 0.31 * \scaleFactorOne - 1)/8 + ( ( 5 - 2 * 0.31 * \scaleFactorOne ) / 8 ) ) * ( 1/8 < ( ( 5 - 2 * 0.31 * \scaleFactorOne ) / 8 ) ) * ( ( ( 5 - 2 * 0.31 * \scaleFactorOne ) / 8 ) < 7/8 )
                        +
                        ( 2 * 0.31 * \scaleFactorOne - 1 - 2 * ( 0.31 * \scaleFactorOne - 1 ) * ( ( 5 - 2 * 0.31 * \scaleFactorOne ) / 8 ) ) * ( ( ( 5 - 2 * 0.31 * \scaleFactorOne ) / 8 ) >= 7/8 )
                    )
                )
            }
        )
        -- ( { ( 5 - 2 * 0.31 * \scaleFactorOne ) / 8 }, 0 )
        ( { ( 5 - 2 * 0.31 * \scaleFactor ) / 8 }, -0.01 ) -- ( { ( 5 - 2 * 0.31 * \scaleFactor ) / 8 }, -0.02 ) -- ( { ( 5 - 2 * 0.31 * \scaleFactorOne ) / 8 }, -0.02 ) -- ( { ( 5 - 2 * 0.31 * \scaleFactorOne ) / 8 }, -0.01 )
        ( { ( 5 - 2 * 0.31 * \scaleFactorOne ) / 8 + 0.01 }, 
        {
        2 *
                (
                    2 * 0.31 * \scaleFactorOne * ( ( 5 - 2 * 0.31 * \scaleFactorOne ) / 8 ) * ( ( ( 5 - 2 * 0.31 * \scaleFactorOne ) / 8 ) <= 1/8 )
                    +
                    ( (2 * 0.31 * \scaleFactorOne - 1)/8 + ( ( 5 - 2 * 0.31 * \scaleFactorOne ) / 8 ) ) * ( 1/8 < ( ( 5 - 2 * 0.31 * \scaleFactorOne ) / 8 ) ) * ( ( ( 5 - 2 * 0.31 * \scaleFactorOne ) / 8 ) < 7/8 )
                    +
                    ( 2 * 0.31 * \scaleFactorOne - 1 - 2 * ( 0.31 * \scaleFactorOne - 1 ) * ( ( 5 - 2 * 0.31 * \scaleFactorOne ) / 8 ) ) * ( ( ( 5 - 2 * 0.31 * \scaleFactorOne ) / 8 ) >= 7/8 )
                )
                *
                (
                    1
                    -
                    (
                        2 * 0.31 * \scaleFactorOne * ( ( 5 - 2 * 0.31 * \scaleFactorOne ) / 8 ) * ( ( ( 5 - 2 * 0.31 * \scaleFactorOne ) / 8 ) <= 1/8 )
                        +
                        ( (2 * 0.31 * \scaleFactorOne - 1)/8 + ( ( 5 - 2 * 0.31 * \scaleFactorOne ) / 8 ) ) * ( 1/8 < ( ( 5 - 2 * 0.31 * \scaleFactorOne ) / 8 ) ) * ( ( ( 5 - 2 * 0.31 * \scaleFactorOne ) / 8 ) < 7/8 )
                        +
                        ( 2 * 0.31 * \scaleFactorOne - 1 - 2 * ( 0.31 * \scaleFactorOne - 1 ) * ( ( 5 - 2 * 0.31 * \scaleFactorOne ) / 8 ) ) * ( ( ( 5 - 2 * 0.31 * \scaleFactorOne ) / 8 ) >= 7/8 )
                    )
                )
                }
            )
        -- (1.02, { 2 *
                (
                    2 * 0.31 * \scaleFactorOne * ( ( 5 - 2 * 0.31 * \scaleFactorOne ) / 8 ) * ( ( ( 5 - 2 * 0.31 * \scaleFactorOne ) / 8 ) <= 1/8 )
                    +
                    ( (2 * 0.31 * \scaleFactorOne - 1)/8 + ( ( 5 - 2 * 0.31 * \scaleFactorOne ) / 8 ) ) * ( 1/8 < ( ( 5 - 2 * 0.31 * \scaleFactorOne ) / 8 ) ) * ( ( ( 5 - 2 * 0.31 * \scaleFactorOne ) / 8 ) < 7/8 )
                    +
                    ( 2 * 0.31 * \scaleFactorOne - 1 - 2 * ( 0.31 * \scaleFactorOne - 1 ) * ( ( 5 - 2 * 0.31 * \scaleFactorOne ) / 8 ) ) * ( ( ( 5 - 2 * 0.31 * \scaleFactorOne ) / 8 ) >= 7/8 )
                )
                *
                (
                    1
                    -
                    (
                        2 * 0.31 * \scaleFactorOne * ( ( 5 - 2 * 0.31 * \scaleFactorOne ) / 8 ) * ( ( ( 5 - 2 * 0.31 * \scaleFactorOne ) / 8 ) <= 1/8 )
                        +
                        ( (2 * 0.31 * \scaleFactorOne - 1)/8 + ( ( 5 - 2 * 0.31 * \scaleFactorOne ) / 8 ) ) * ( 1/8 < ( ( 5 - 2 * 0.31 * \scaleFactorOne ) / 8 ) ) * ( ( ( 5 - 2 * 0.31 * \scaleFactorOne ) / 8 ) < 7/8 )
                        +
                        ( 2 * 0.31 * \scaleFactorOne - 1 - 2 * ( 0.31 * \scaleFactorOne - 1 ) * ( ( 5 - 2 * 0.31 * \scaleFactorOne ) / 8 ) ) * ( ( ( 5 - 2 * 0.31 * \scaleFactorOne ) / 8 ) >= 7/8 )
                    )
                )
                }
            ) -- (1.02, 
            { 
                2 *
                (
                    2 * 0.31 * \scaleFactor * ( ( 5 - 2 * 0.31 * \scaleFactorOne ) / 8 ) * ( ( ( 5 - 2 * 0.31 * \scaleFactorOne ) / 8 ) <= 1/8 )
                    +
                    ( (2 * 0.31 * \scaleFactor - 1)/8 + ( ( 5 - 2 * 0.31 * \scaleFactorOne ) / 8 ) ) * ( 1/8 < ( ( 5 - 2 * 0.31 * \scaleFactorOne ) / 8 ) ) * ( ( ( 5 - 2 * 0.31 * \scaleFactorOne ) / 8 ) < 7/8 )
                    +
                    ( 2 * 0.31 * \scaleFactor - 1 - 2 * ( 0.31 * \scaleFactor - 1 ) * ( ( 5 - 2 * 0.31 * \scaleFactorOne ) / 8 ) ) * ( ( ( 5 - 2 * 0.31 * \scaleFactorOne ) / 8 ) >= 7/8 )
                )
                *
                (
                    1
                    -
                    (
                        2 * 0.31 * \scaleFactor * ( ( 5 - 2 * 0.31 * \scaleFactorOne ) / 8 ) * ( ( ( 5 - 2 * 0.31 * \scaleFactorOne ) / 8 ) <= 1/8 )
                        +
                        ( (2 * 0.31 * \scaleFactor - 1)/8 + ( ( 5 - 2 * 0.31 * \scaleFactorOne ) / 8 ) ) * ( 1/8 < ( ( 5 - 2 * 0.31 * \scaleFactorOne ) / 8 ) ) * ( ( ( 5 - 2 * 0.31 * \scaleFactorOne ) / 8 ) < 7/8 )
                        +
                        ( 2 * 0.31 * \scaleFactor - 1 - 2 * ( 0.31 * \scaleFactor - 1 ) * ( ( 5 - 2 * 0.31 * \scaleFactorOne ) / 8 ) ) * ( ( ( 5 - 2 * 0.31 * \scaleFactorOne ) / 8 ) >= 7/8 )
                    )
                )
            }
            ) -- ( { ( 5 - 2 * 0.31 * \scaleFactorOne ) / 8 + 0.01 },
            { 
                2 *
                (
                    2 * 0.31 * \scaleFactor * ( ( 5 - 2 * 0.31 * \scaleFactorOne ) / 8 ) * ( ( ( 5 - 2 * 0.31 * \scaleFactorOne ) / 8 ) <= 1/8 )
                    +
                    ( (2 * 0.31 * \scaleFactor - 1)/8 + ( ( 5 - 2 * 0.31 * \scaleFactorOne ) / 8 ) ) * ( 1/8 < ( ( 5 - 2 * 0.31 * \scaleFactorOne ) / 8 ) ) * ( ( ( 5 - 2 * 0.31 * \scaleFactorOne ) / 8 ) < 7/8 )
                    +
                    ( 2 * 0.31 * \scaleFactor - 1 - 2 * ( 0.31 * \scaleFactor - 1 ) * ( ( 5 - 2 * 0.31 * \scaleFactorOne ) / 8 ) ) * ( ( ( 5 - 2 * 0.31 * \scaleFactorOne ) / 8 ) >= 7/8 )
                )
                *
                (
                    1
                    -
                    (
                        2 * 0.31 * \scaleFactor * ( ( 5 - 2 * 0.31 * \scaleFactorOne ) / 8 ) * ( ( ( 5 - 2 * 0.31 * \scaleFactorOne ) / 8 ) <= 1/8 )
                        +
                        ( (2 * 0.31 * \scaleFactor - 1)/8 + ( ( 5 - 2 * 0.31 * \scaleFactorOne ) / 8 ) ) * ( 1/8 < ( ( 5 - 2 * 0.31 * \scaleFactorOne ) / 8 ) ) * ( ( ( 5 - 2 * 0.31 * \scaleFactorOne ) / 8 ) < 7/8 )
                        +
                        ( 2 * 0.31 * \scaleFactor - 1 - 2 * ( 0.31 * \scaleFactor - 1 ) * ( ( 5 - 2 * 0.31 * \scaleFactorOne ) / 8 ) ) * ( ( ( 5 - 2 * 0.31 * \scaleFactorOne ) / 8 ) >= 7/8 )
                    )
                )
            }
        )
    ;

    \draw[domain = 0:1, samples = 350] plot (\x, 
    { 
        2 *
        (
            2 * 0.31 * \scaleFactor * \x * ( \x <= 1/8 )
            +
            ( (2 * 0.31 * \scaleFactor - 1)/8 + \x ) * ( 1/8 < \x ) * ( \x < 7/8 )
            +
            ( 2 * 0.31 * \scaleFactor - 1 - 2 * ( 0.31 * \scaleFactor - 1 ) * \x ) * ( \x >= 7/8 )
        )
        *
        (
            1
            -
            (
                2 * 0.31 * \scaleFactor * \x * ( \x <= 1/8 )
                +
                ( (2 * 0.31 * \scaleFactor - 1)/8 + \x ) * ( 1/8 < \x ) * ( \x < 7/8 )
                +
                ( 2 * 0.31 * \scaleFactor - 1 - 2 * ( 0.31 * \scaleFactor - 1 ) * \x ) * ( \x >= 7/8 )
            )
        )
    });

   \draw[domain = 0:1, samples = 350, red] plot (\x, 
    { 
        2 *
        (
            2 * 0.31 * \scaleFactorOne * \x * ( \x <= 1/8 )
            +
            ( (2 * 0.31 * \scaleFactorOne - 1)/8 + \x ) * ( 1/8 < \x ) * ( \x < 7/8 )
            +
            ( 2 * 0.31 * \scaleFactorOne - 1 - 2 * ( 0.31 * \scaleFactorOne - 1 ) * \x ) * ( \x >= 7/8 )
        )
        *
        (
            1
            -
            (
                2 * 0.31 * \scaleFactorOne * \x * ( \x <= 1/8 )
                +
                ( (2 * 0.31 * \scaleFactorOne - 1)/8 + \x ) * ( 1/8 < \x ) * ( \x < 7/8 )
                +
                ( 2 * 0.31 * \scaleFactorOne - 1 - 2 * ( 0.31 * \scaleFactorOne - 1 ) * \x ) * ( \x >= 7/8 )
            )
        )
    });

    \draw (0,0) -- (0,-0.02) node[below]{$0$}
        (1/8,0) -- (1/8,-0.02) node[below]{$\nicefrac18$}
        (7/8,0) -- (7/8,-0.02) node[below]{$\nicefrac78$}
        (1,0) -- (1,-0.02) node[below]{$1$}
        ( { ( ( 5 - 2 * 0.31 * \scaleFactor ) / 8 ) * 0.5  + ( ( 5 - 2 * 0.31 * \scaleFactorOne ) / 8 ) * 0.5}, -0.02 ) node[below]{$\Theta( \labs{ \e-\e' } )$}
        (1.02, 
        {
            0.5 *
            2 *
                (
                    2 * 0.31 * \scaleFactor * ( ( 5 - 2 * 0.31 * \scaleFactorOne ) / 8 ) * ( ( ( 5 - 2 * 0.31 * \scaleFactorOne ) / 8 ) <= 1/8 )
                    +
                    ( (2 * 0.31 * \scaleFactor - 1)/8 + ( ( 5 - 2 * 0.31 * \scaleFactorOne ) / 8 ) ) * ( 1/8 < ( ( 5 - 2 * 0.31 * \scaleFactorOne ) / 8 ) ) * ( ( ( 5 - 2 * 0.31 * \scaleFactorOne ) / 8 ) < 7/8 )
                    +
                    ( 2 * 0.31 * \scaleFactor - 1 - 2 * ( 0.31 * \scaleFactor - 1 ) * ( ( 5 - 2 * 0.31 * \scaleFactorOne ) / 8 ) ) * ( ( ( 5 - 2 * 0.31 * \scaleFactorOne ) / 8 ) >= 7/8 )
                )
                *
                (
                    1
                    -
                    (
                        2 * 0.31 * \scaleFactor * ( ( 5 - 2 * 0.31 * \scaleFactorOne ) / 8 ) * ( ( ( 5 - 2 * 0.31 * \scaleFactorOne ) / 8 ) <= 1/8 )
                        +
                        ( (2 * 0.31 * \scaleFactor - 1)/8 + ( ( 5 - 2 * 0.31 * \scaleFactorOne ) / 8 ) ) * ( 1/8 < ( ( 5 - 2 * 0.31 * \scaleFactorOne ) / 8 ) ) * ( ( ( 5 - 2 * 0.31 * \scaleFactorOne ) / 8 ) < 7/8 )
                        +
                        ( 2 * 0.31 * \scaleFactor - 1 - 2 * ( 0.31 * \scaleFactor - 1 ) * ( ( 5 - 2 * 0.31 * \scaleFactorOne ) / 8 ) ) * ( ( ( 5 - 2 * 0.31 * \scaleFactorOne ) / 8 ) >= 7/8 )
                    )
                )
                +
                0.5 *
                2 *
                (
                    2 * 0.31 * \scaleFactorOne * ( ( 5 - 2 * 0.31 * \scaleFactorOne ) / 8 ) * ( ( ( 5 - 2 * 0.31 * \scaleFactorOne ) / 8 ) <= 1/8 )
                    +
                    ( (2 * 0.31 * \scaleFactorOne - 1)/8 + ( ( 5 - 2 * 0.31 * \scaleFactorOne ) / 8 ) ) * ( 1/8 < ( ( 5 - 2 * 0.31 * \scaleFactorOne ) / 8 ) ) * ( ( ( 5 - 2 * 0.31 * \scaleFactorOne ) / 8 ) < 7/8 )
                    +
                    ( 2 * 0.31 * \scaleFactorOne - 1 - 2 * ( 0.31 * \scaleFactorOne - 1 ) * ( ( 5 - 2 * 0.31 * \scaleFactorOne ) / 8 ) ) * ( ( ( 5 - 2 * 0.31 * \scaleFactorOne ) / 8 ) >= 7/8 )
                )
                *
                (
                    1
                    -
                    (
                        2 * 0.31 * \scaleFactorOne * ( ( 5 - 2 * 0.31 * \scaleFactorOne ) / 8 ) * ( ( ( 5 - 2 * 0.31 * \scaleFactorOne ) / 8 ) <= 1/8 )
                        +
                        ( (2 * 0.31 * \scaleFactorOne - 1)/8 + ( ( 5 - 2 * 0.31 * \scaleFactorOne ) / 8 ) ) * ( 1/8 < ( ( 5 - 2 * 0.31 * \scaleFactorOne ) / 8 ) ) * ( ( ( 5 - 2 * 0.31 * \scaleFactorOne ) / 8 ) < 7/8 )
                        +
                        ( 2 * 0.31 * \scaleFactorOne - 1 - 2 * ( 0.31 * \scaleFactorOne - 1 ) * ( ( 5 - 2 * 0.31 * \scaleFactorOne ) / 8 ) ) * ( ( ( 5 - 2 * 0.31 * \scaleFactorOne ) / 8 ) >= 7/8 )
                    )
                )
        }
        ) node[right]{$\Theta \brb{ \labs{\e - \e'}^2}$}
    ;

    \draw[<->] (0,0.6) -- (0,0) -- (1.1,0);

    \end{tikzpicture}
    \caption{Qualitative plots of the densities $f_\e$, \textcolor{red}{$f_{\e'}$} (left) and corresponding expected rewards (right) used in the proof of \Cref{t:lower-bound-full-M} for two values $\e, \textcolor{red}{\e'} >0$. Note that the difference in reward by posting a price that is optimal for one instance $\e'$ when the actual instance is $\e$ is $\Theta\brb{\labs{\e-\e'}^2}$.   
    \label{f:lower-bound-density}}
\end{figure*}
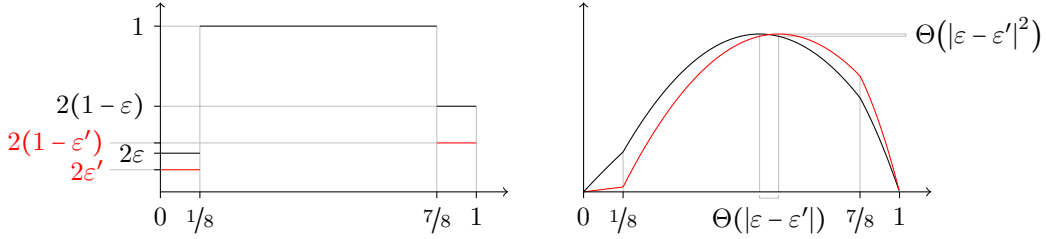
\begin{proof}[Proof sketch.]
In the proof, we build a family of $2$-Lipschitz cdfs $F_\e$ parameterized by $\e \in [0,1]$, so that if two instances are parameterized by $\e$ and $\e'$ respectively, then their medians are $\Theta\brb{\labs{\e-\e'}}$-away from each other (\cref{f:lower-bound-density}).
The high-level idea is to leverage a Bayesian argument to show that if the underlying instance $F_E$ is such that $E$ is drawn uniformly at random in $[0,1]$, then, at round $t$, the broker cannot reliably determine prices that are much closer than $1/\sqrt{t}$ to the corresponding median $m_E$ when distances are measured with respect to the metric induced by the cdf $F_E$.
This, together with our key \Cref{l:mysticus}, leads to the conclusion.
\end{proof}

\section{2-Bit Feedback}
We start the study of the $2$-bit feedback case under the assumption that the traders' valuation distribution admits a Lipschitz cdf $F$.
The algorithm we propose (\cref{a:MBS}) is based on the following observation: by posting any price $p$, the broker has access to two noisy realizations of $F(p)$.
Recalling that \cref{l:mysticus} suggests tracking the median of $F$ (i.e., a point $m$ at which $F(m) = 1/2$), and since $F$ is a non-decreasing function, we can proceed using a natural binary search strategy to move toward the median.
This can be done in epochs:  in each one, we repeatedly test a (dyadic) price until the first time we can confidently decide that the median is to the left or right of the current price. 
\begin{algorithm}
\textbf{Input:} Confidence parameter $\delta \in (0,1) $, time horizon $n\in \N$\; 
\textbf{Initialization:} $Q_1 \coloneqq \frac{1}{2}$, $\tau \coloneqq 1$, $t \coloneqq 1$\;
\While
{%
    time $t \le n$
}
{
    Let $s \coloneqq 0$, $Y_{\tau,s} \coloneqq 0$, $t_{\tau-1}\coloneqq t-1$\;
    \While
    {
        time $t \le n$
    }
    {
        Post $P_t \coloneqq Q_{\tau}$ and 
        receive feedback $\I\{V_{2t-1} \le P_t\}$, $\I\{V_{2t} \le P_t\}$\;
        Update  $s \coloneqq s + 2$, $Y_{\tau,s} \coloneqq Y_{\tau,s-2} + \I\{V_{2t-1} \le P_t\} + \I\{V_{2t} \le P_t\}$, $t \coloneqq t + 1$\;
        \textbf{if}
            $\frac{1}{s}Y_{\tau,s} + \sqrt{\frac{\ln(2/\delta)}{2s}} < \frac{1}{2}$
        \textbf{then} 
        let $Q_{\tau + 1} \coloneqq Q_{\tau + 1} + \frac{1}{2^{\tau + 1}}$, $s_\tau \coloneqq s$, 
        $\tau \coloneqq \tau + 1$, and
        \textbf{break}\;
        \textbf{else if} 
            $\frac{1}{s}Y_{\tau,s} - \sqrt{\frac{\ln(2/\delta)}{2s}}> \frac{1}{2} $
        \textbf{then}
            let $Q_{\tau + 1} \coloneqq Q_{\tau + 1} - \frac{1}{2^{\tau + 1}}$, $s_\tau \coloneqq s$,
            $\tau \coloneqq \tau + 1$, 
            and
            \textbf{break}\;
    }
}
\caption{Median Binary Search (MBS)}
\label{a:MBS}
\end{algorithm}

We now show that a suitably tuned \Cref{a:MBS} has regret guarantees of $O\brb{\ln(MT)\ln(T)}$. 

\begin{theorem}
\label{t:mbs}
If $\nu$ has an $M$-Lipschitz cdf $F$, for some $M\ge 1$, then, for all time horizons $T\in \N$, the regret of MBS tuned with parameters $\delta\coloneqq \fracc{2}{T^3}$ and $n\coloneqq T$ satisfies
\[
    R_T
\le
    2 + 6\log_2(MT) \ln(T)\;.
\]
\end{theorem}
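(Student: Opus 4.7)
The plan is to condition on a high-probability good event where all Hoeffding confidence intervals hold, and then decompose the regret by epoch index into an "early" regime and a "late" regime.

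First, I define the good event $E$ as the intersection, over every epoch $\tau$ and every sample count $s$ reached by the algorithm, of the concentrations $|Y_{\tau,s}/s - F(Q_\tau)| \leq \sqrt{\ln(2/\delta)/(2s)}$. Since $(\I\{V_i \leq Q_\tau\})_i$ is conditionally i.i.d.\ Bernoulli$(F(Q_\tau))$, Hoeffding's inequality gives failure probability $\leq \delta$ per pair; a union bound over the $O(T^2)$ relevant pairs with $\delta = 2/T^3$ gives $\Pb[E^c] = O(1/T)$, and since the per-round regret is at most $1/2$ by \Cref{l:mysticus}, the contribution of $E^c$ to $R_T$ is $O(1)$. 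On $E$, I then prove by induction on $\tau$ that after $\tau-1$ completed epochs a median $m$ of $F$ lies in $[Q_\tau - 1/2^\tau,\, Q_\tau + 1/2^\tau]$: the inductive step uses that on $E$ the upper confidence bound satisfies UCI $\geq F(Q_\tau)$, so UCI $< 1/2$ forces $F(Q_\tau) < 1/2$ and hence $m > Q_\tau$, which matches the update $Q_{\tau+1} = Q_\tau + 1/2^{\tau+1}$ (the LCI case is symmetric). By the $M$-Lipschitzness of $F$ this yields $|F(Q_\tau) - 1/2| \leq M/2^\tau$.

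Next, I establish per-epoch regret bounds on $E$. For a completed epoch $\tau$, the sandwich UCI $\leq F(Q_\tau) + 2\sqrt{\ln(2/\delta)/(2s)}$ forces stopping once $s > 2\ln(2/\delta)/(1/2 - F(Q_\tau))^2$, so $s_\tau \leq 2\ln(2/\delta)/(1/2-F(Q_\tau))^2 + 2$; multiplying by the per-round regret from \Cref{l:mysticus} gives epoch regret $s_\tau (1/2-F(Q_\tau))^2 \leq 2\ln(2/\delta) + 2(1/2-F(Q_\tau))^2 \leq 2\ln(2/\delta) + 1/2$. For the unique possibly-incomplete last epoch $\tau^*$, the failure of both stopping rules at $s_{\tau^*}$ combined with $E$ yields $|1/2 - F(Q_{\tau^*})| \leq \sqrt{2\ln(2/\delta)/s_{\tau^*}}$, and hence its regret is at most $2\ln(2/\delta)$ as well.

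Finally, I sum by setting $N := \lceil \log_2(MT) \rceil$ and splitting by epoch index. Epochs with $\tau \leq N$ number at most $N$ (distinct positive integer indices), each contributing at most $2\ln(2/\delta) + 1/2$ by the previous step, for a total of at most $N(2\ln(2/\delta) + 1/2)$. Epochs with $\tau > N$ have per-round regret at most $2M^2/4^\tau \leq 2M^2/4^{N+1} \leq 1/(2T^2)$ by the inductive bound, and since they collectively use at most $T$ rounds, their cumulative contribution is at most $1/(2T)$. Substituting $\ln(2/\delta) = 3\ln T$ and adding the $E^c$ contribution gives the stated bound. The main subtlety I anticipate is realizing that the late-epoch regret is controlled uniformly over rounds without any bound on the random quantity $N_{\mathrm{complete}}$: a naive analysis would need to bound the number of completed epochs, for which no a priori $O(\log(MT))$ bound is available on $E$, so partitioning by the epoch index (rather than by "completed"/"incomplete") is essential.
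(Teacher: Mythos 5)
Your proposal is correct and follows essentially the same route as the paper's proof: a Hoeffding-plus-union-bound good event, the invariant $|m-Q_\tau|\le 2^{-\tau}$ combined with $M$-Lipschitzness, a per-epoch regret of $O(\ln(1/\delta))$ extracted from the stopping rule, and a split of the epochs at $\lceil\log_2(MT)\rceil$ with the late epochs controlled uniformly per round. The only (cosmetic) difference is bookkeeping of the exact constant: the paper charges only the first $\lceil\log_2(MT)\rceil-1$ epochs at the $2\ln(2/\delta)$ rate and folds epoch $\lceil\log_2(MT)\rceil$ itself into the "late" bucket, which is what lets it land exactly on $2+6\log_2(MT)\ln T$ rather than the slightly larger $6\lceil\log_2(MT)\rceil\ln T$ your accounting would give.
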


Due to space constraints, we defer the full proof of this result to \cref{s:appe:t:mbs}.

\begin{proof}[Proof sketch]
The proof is based on the following observations.
First, during an epoch where a price $p$ is tested, given that one has to distinguish if the parameter $F(p)$ of a sequence of Bernoulli random variables is bigger or smaller than $1/2$, a concentration argument shows that the duration of this epoch is at most $O\brb{\ln(1/\delta)/\brb{1/2-F(p)}^2}$, where $\delta$ is the confidence parameter.
Second, by \cref{l:mysticus}, the broker regrets $2 \brb{1/2-F(p)}^2$ by playing a price $p$, and hence the total regret of an epoch where the broker tests $p$ is at most $O\brb{\ln (1/\delta)}$.
We then use the fact that the $F$ is $M$-Lipschitz to prove that after at most $\log_2(MT)$ epochs, the cumulative regret that the algorithm suffers from that point onward is constant, and conclude by showing that a proper tuning of the confidence parameter $\delta$ (that does not require the knowledge of $M$) leads to the stated guarantees.
\end{proof}

We now show that \cref{a:MBS} is optimal, up to a multiplicative $\ln T$ term.
\begin{theorem}
    \label{t:lower-bound-2-bit-M}
    There exist two numerical constants $c_1$ and $c_2$ such that for any $M \ge 16$ and any time horizon $T \ge c_2\log_2(M)$, the worst-case regret of any $2$-bit feedback algorithm satisfies
    \[
        \sup_{\nu\in\cD_M} R_T^\nu
    \ge
       c_1 \ln (MT) \;,
    \]
    where $R_T^\nu$ is the regret at time $T$ of the algorithm when the i.i.d.\ sequence of traders' valuations follows the distribution $\nu$, and $\cD_M$ is the set of all distributions $\nu$ that admits an $M$-Lipschitz cdf.
\end{theorem}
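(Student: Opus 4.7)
The plan is to establish the lower bound through a hard family of $M$-Lipschitz instances combined with a multi-scale information-theoretic argument based on \Cref{l:mysticus}.

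First, I would construct a parametric family of cdfs $\{F_m\}$ indexed by the median location $m \in [1/(2M), 1 - 1/(2M)]$ via $F_m(p) \coloneqq 0$ for $p \le m - 1/(2M)$, $F_m(p) \coloneqq 1/2 + M(p-m)$ for $p \in [m - 1/(2M), m + 1/(2M)]$, and $F_m(p) \coloneqq 1$ for $p \ge m + 1/(2M)$. Each $F_m$ is an $M$-Lipschitz cdf with median $m$, and by \Cref{l:mysticus} the instantaneous regret at price $p$ equals $2\brb{F_m(p) - 1/2}^2$, which is $2M^2(p-m)^2$ on the ramp around $m$ and exactly $1/2$ outside of it.

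Next, I would consider a dyadic sequence of scales $\Delta_k \coloneqq 2^{-k}$ for $k = 0, 1, \ldots, K$ with $K = \Theta\brb{\log_2(MT)}$, and for each scale a pair of instances whose medians are $\Delta_k$ apart. At coarse scales ($\Delta_k \ge 1/M$) the two ramps are disjoint, so any query either lies outside both ramps (deterministic feedback, instantaneous regret exactly $1/2$) or inside at most one ramp (still regret $\Omega(1)$ since the far median is at distance $\ge \Delta_k/2 \ge 1/(2M)$); this forces a binary-search-like behaviour contributing $\Omega(\log_2 M)$ total regret. At fine scales ($\Delta_k < 1/M$) the ramps overlap, and a Le Cam / Pinsker argument on the $\mathrm{KL}$ divergence between the two Bernoulli feedback processes shows that reliably distinguishing the two instances requires $\Omega\brb{1/(M^2\Delta_k^2)}$ queries in the overlap region, each necessarily incurring regret $\Omega(M^2\Delta_k^2)$, and hence contributes $\Omega(1)$ per scale; summing over the $\Theta(\log T)$ fine scales yields $\Omega(\log T)$. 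Together these give the claimed $\Omega\brb{\log(MT)}$ bound.

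To combine the per-scale contributions without double-counting observations, I would place a Bayesian prior over a hierarchically structured subfamily---e.g., a binary tree of candidate medians where each level refines its predecessor by a factor of two---and bound the Bayes-averaged regret via a chain-rule decomposition of the $\mathrm{KL}$ divergence along the tree, in the same spirit as the Bayesian argument used in \Cref{t:lower-bound-full-M}. The main obstacle I anticipate is precisely this aggregation step: a single observation can be informative at several scales simultaneously, so a careful hierarchical construction is needed to guarantee that regret accumulates linearly in the number of scales rather than being ``shared'' across them. A secondary subtlety is gluing together the qualitatively different regimes of essentially deterministic feedback at coarse scales and genuinely noisy feedback at fine scales into a single uniform estimate, which is presumably why the authors defer the full argument to the appendix.
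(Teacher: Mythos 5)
Your instance family for the $\Omega(\ln M)$ part is essentially the paper's (uniform distributions on intervals of length $\Theta(1/M)$, i.e.\ your ramp cdfs), and the binary-search intuition at coarse scales is the right one. But your overall architecture is considerably harder than what is actually needed, and the step you yourself flag as the main obstacle---aggregating the per-scale contributions via a hierarchical prior and a chain-rule KL decomposition without double-counting---is precisely the step you do not carry out. The paper sidesteps it entirely with an elementary observation: since $\ln(MT)=\ln M+\ln T\le 2\max(\ln M,\ln T)$, it suffices to prove the two lower bounds $\Omega(\ln T)$ and $\Omega(\ln M)$ \emph{separately, against possibly different instance families}, and take the maximum. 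The $\Omega(\ln T)$ part then comes for free from \Cref{t:lower-bound-full-M}, because any $2$-bit feedback algorithm can be simulated by a full-feedback algorithm, so full-feedback lower bounds apply a fortiori. No multi-scale construction, no tree of medians, and no gluing of the ``deterministic'' and ``noisy'' regimes is required.

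For the remaining $\Omega(\ln M)$ part, the paper's argument is also simpler than your coarse-scale sketch: it reduces to a purely deterministic surrogate game (guess which of $2^n$ cells contains the support, receiving only the bit $\I\{I_t\le k^\star\}$ and paying $1/2$ per wrong guess), and an adversary/induction argument maintains a surviving discrete segment of candidate instances of size at least $2^{n-t}-1$ after $t$ rounds, forcing $\Omega(n)=\Omega(\log_2 M)$ rounds of constant instantaneous regret. No Le Cam or Pinsker bound is needed there because the feedback outside the support is literally noiseless. Finally, one local inaccuracy in your fine-scale argument: a query in the overlap region does \emph{not} necessarily incur regret $\Omega(M^2\Delta_k^2)$---a price at the true median incurs zero regret. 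The correct statement is that, since the two medians are $\Delta_k$ apart, no single price can be within $\Delta_k/2$ of both, so the regret averaged over the two instances is $\Omega(M^2\Delta_k^2)$ per round; this is exactly the kind of two-point averaging the paper's Bayesian argument for \Cref{t:lower-bound-full-M} formalizes. As written, your proposal has a genuine gap at the aggregation step and would need either the hierarchical construction completed in full or, better, the paper's decoupling trick.
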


Due to space constraints, we defer the full proof of this result to \Cref{s:appe:t:lower-bound-2-bit-M}.

\begin{proof}[Proof sketch]
The proof builds a family of densities, each concentrated in a different region of length $1/M$. 
To avoid suffering linear regret, the broker has to detect this region. 
To accomplish this task, we show that the broker is essentially forced to solve a binary search problem that needs $\log_2(M)$ rounds in each of which the instantaneous regret is constant. 
Noticing that any regret lower bound for full-feedback algorithms also applies to $2$-bit feedback algorithms, the $\ln T$ lower bound of \Cref{t:lower-bound-full-M} together with the binary search $\ln M$ lower bound yield a lower bound of $\Omega\brb{\max(\ln T ,\ln M)} = \Omega\brb{ \ln(MT)}$.
\end{proof}

\section{Non-Lipschitz or Discontinuous Pdfs}
\label{s:beyond-regularity}

We now investigate how the problem changes if we lift the assumption that $\nu$ has a Lipschitz or continuous cdf.
First, note that when the cdf of $\nu$ is not continuous, \Cref{l:mysticus}, and, consequently, the guarantees of \Cref{t:ftm}, no longer hold.
Indeed, in general, no full-feedback algorithm can achieve regret guarantees better than $\sqrt{T}$.
As shown in the proof of the next theorem, the reason is that our problem contains instances that resemble online learning with expert advice (with 2 experts), which has a known lower bound of $\Omega(\sqrt{T})$.
\begin{theorem}
\label{t:lower-full-info-sqrtT}
There exist two numerical constants $c_1$ and $c_2$ such that, for any time horizon $T \ge c_2$, the worst-case regret of any full feedback algorithm satisfies
\[
    \sup_{\nu\in\cD} R_T^\nu
\ge
   c_1 \sqrt{T} \;,
\]
where $R_T^\nu$ is the regret at time $T$ of the algorithm when the i.i.d.\ sequence of traders' valuations follows the distribution $\nu$, and $\cD$ is the set of all distributions $\nu$.
\end{theorem}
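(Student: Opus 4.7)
The plan is to embed a $2$-expert online learning instance into the trading problem using two-atom discontinuous distributions. Fix a small $\varepsilon \in (0,1/4)$ to be tuned at the end, and define the two candidate valuation distributions $\nu_+$ and $\nu_-$, both supported on $\{0,1\}$, by $\nu_\pm(\{0\}) \coloneqq 1/2 \pm \varepsilon$ and $\nu_\pm(\{1\}) \coloneqq 1/2 \mp \varepsilon$. A direct computation of $\gft$ shows that the expected per-round reward $p \mapsto \E\bsb{\GFT_t(p)}$ takes only three values: for any interior $p \in (0,1)$ both instances give $2\brb{1/2+\varepsilon}\brb{1/2-\varepsilon} = 1/2 - 2\varepsilon^2$; at $p=0$ it equals $1 - \brb{\nu_\pm(\{1\})}^2 = 3/4 \pm \varepsilon - \varepsilon^2$; and at $p=1$ it equals $3/4 \mp \varepsilon - \varepsilon^2$ by symmetry. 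Consequently the unique optimal price is $p=0$ under $\nu_+$ and $p=1$ under $\nu_-$, playing the opposite boundary incurs instantaneous regret exactly $2\varepsilon$, and any interior price is $\Theta(1)$-suboptimal against both instances. Moreover, the elementary inequality $1/4 + \varepsilon + \varepsilon^2 \ge 2\varepsilon$ (valid for $\varepsilon \le 1/4$) ensures that \emph{every} non-optimal round costs at least $2\varepsilon$ in instantaneous regret, which cleanly handles the bookkeeping for interior plays.

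Next I would apply a standard Bretagnolle--Huber / KL--Pinsker two-hypothesis argument. In the full-feedback model the observations $V_1, \dots, V_{2T}$ are i.i.d.\ Bernoulli\emph{, regardless of} the broker's policy, so the laws of the broker's entire history under the two instances are $2T$-fold products of Bernoulli distributions whose KL divergence is of order $T\varepsilon^2$. Pinsker's inequality then yields $\mathrm{TV} \le O\brb{\varepsilon\sqrt T}$, which can be made at most $1/8$ by setting $\varepsilon \coloneqq c/\sqrt T$ for a sufficiently small absolute constant $c > 0$. Letting $N_0$ and $N_1$ be the (random) numbers of rounds in which the broker posts $0$ and $1$, the preceding remarks give $R_T^{\nu_+} \ge 2\varepsilon\,\E_{\nu_+}[T-N_0]$ and $R_T^{\nu_-} \ge 2\varepsilon\,\E_{\nu_-}[T-N_1]$; combined with the standard testing inequality $\babs{\E_{\nu_+}[N_0] - \E_{\nu_-}[N_0]} \le T\cdot\mathrm{TV}$, this yields
\[
    R_T^{\nu_+} + R_T^{\nu_-} \;\ge\; 2\varepsilon\Brb{2T - \E_{\nu_+}[N_0] - \E_{\nu_-}[N_1]} \;\ge\; 2\varepsilon T\brb{1 - \mathrm{TV}} \;=\; \Omega(\varepsilon T) \;=\; \Omega\brb{\sqrt T}\;,
\]
after which taking the maximum of the two regrets and choosing $c_1,c_2$ appropriately delivers the claim.

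The main obstacle I anticipate lies in the \emph{choice of the instance family}, since most ``natural'' two-atom constructions fail: placing atoms strictly inside $(0,1)$ makes the reward function constant on the convex hull of the support and hence identical under both instances (ruling them out), whereas using three or more atoms typically creates a sharp interior peak, analogous to the median in \Cref{l:mysticus}, which a broker can easily detect. Placing the atoms exactly at the boundary $\{0,1\}$ is precisely what breaks the symmetry between $\nu_+$ and $\nu_-$ at the optimal prices while keeping their product laws statistically indistinguishable at the $\sqrt T$ scale. Once this family is in place, the remaining steps are a routine Bernoulli two-hypothesis computation and a tuning of the constant $c$.
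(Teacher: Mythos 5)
Your proof is correct and follows the same high-level strategy as the paper's (sketched) argument: embed a two-hypothesis testing problem via a one-parameter family of atomic distributions whose optimal price flips between two candidate locations with a reward gap of order $\e$, show the two instances are statistically indistinguishable from $O(1/\e^2)$ samples, and tune $\e = \Theta(1/\sqrt{T})$. The only substantive difference is the instance family: the paper uses a four-atom distribution $\frac{1-\e}{4}\delta_0 + \frac14\delta_{1/3} + \frac14\delta_{2/3} + \frac{1+\e}{4}\delta_1$ with the competing optima at the interior points $1/3$ and $2/3$, whereas you use two atoms at $\{0,1\}$ with the optima at the boundary. Your choice makes the reward computation and the per-round regret accounting cleaner (in particular, the identity $1/4+\e+\e^2 - 2\e = (1/2-\e)^2 \ge 0$ neatly disposes of interior prices), and your execution of the ``standard information-theoretic argument'' via Pinsker and the $N_0,N_1$ bookkeeping is a valid and complete instantiation of what the paper leaves implicit. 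One small correction to your closing discussion: a two-atom construction with atoms at interior points $a<b$ does \emph{not} fail --- the expected reward at $p=a$ and $p=b$ is still $1-(1-q)^2$ and $1-q^2$ respectively (only the open interval $(a,b)$ yields the symmetric value $2q(1-q)$), so the same argument would go through; this misstatement lives only in your motivational remarks and does not affect the proof.
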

\begin{proof}[Proof sketch]
For each $\e \in [-1/4,1/4]$, define
$
    \nu_\e \coloneqq \frac{1-\e}{4} \delta_0 + \frac{1}{4} \delta_{1/3} + \frac{1}{4} \delta_{2/3} + \frac{1+\e}{4}\delta_1
$,
where, for any $a \in \bbR$, we denoted by $\delta_a$ the Dirac's delta probability measure centered at $a$.
Let $(V_{\e,t})_{\e\in[-1/4,1/4], t\in \N}$ be an independent family such that for each $\e \in [-1/4,1/4]$ the sequence $V_{\e,1},V_{\e,2},\dots$ is i.i.d.\ with common distribution $\nu_{\e}$.
For each $\e \in [-1/4,1/4]$, each $t \in \N$, and each $p \in [0,1]$, define $\GFT_{\e,t}(p) \coloneqq \gft(p,V_{\e,2t-1},V_{\e,2t})$.
Straightforward computations show that, for each $\e \in [-1/4,1/4]$ and each $t \in \N$, the function $p \mapsto \E\bsb{\GFT_{\e,t}(p)}$ is maximized at $1/3$ or at $2/3$, with any other point having an expected reward that is less than $31/256$-away from the minimum expected reward achieved at $1/3$ or $2/3$. Furthermore, for any $\e \in [-1/4,1/4]$ and any $t \in \N$, the maximum is at $1/3$ or $2/3$ depending on whether $\e < 0$ or $\e > 0$, given that
$
    \E\bsb{\GFT_{\e,t}(1/3)}
= 
    \frac{11}{16}-\frac{\e}{8}-\frac{\e^2}{8}
$
and 
$
    \E\bsb{\GFT_{\e,t}(2/3)}
= 
    \frac{11}{16}+\frac{\e}{8}-\frac{\e^2}{8}
$,
from which it follows also that
$
    \E\bsb{\GFT_{\e,t}(2/3)} - \E\bsb{\GFT_{\e,t}(1/3)} = \frac{\e}{4}
$.
Hence, in order not to suffer $\Omega(\e T)$ regret, an algorithm has to detect the \emph{sign} of $\e$.
However, a standard information-theoretic argument shows that a sample of order $\Omega(1/\e^2)$ is required in order to detect the sign of $\e$.
During this period, the best any algorithm can do is to play blindly in the set $\{1/3,2/3\}$, incurring in a cumulative regret of order $\Omega\lrb{\frac{1}{\e^2} \cdot \e} = \Omega(1/\e)$.
Overall, any learner has to suffer $\Omega\lrb{ \min\lrb{\frac{1}{\e}, \e T}}$ worst-case regret, which, by tuning $\e = \Theta(1/\sqrt{T})$, leads to a worst-case regret lower bound of $\Omega(\sqrt{T})$.
\end{proof}

We now focus on the upper bound.
A closer look at the proof of \Cref{l:mysticus} shows that if we drop the cdf continuity assumption in the Median Lemma, the formula generalizes to
\[
    \E\bsb{\GFT_t(p)}
=
    2F(p)\brb{1-F(p)} + F(p)F^\circ(p) \eqqcolon \Psi(p)\;, \qquad \forall p \in [0,1], \ \forall t \in \N,
\]
with no assumptions on $\nu$, and where $F$ is the cdf of $\nu$ and we defined $F^\circ(p) \coloneqq \nu \bsb{\{p\}}$.
This suggests the strategy of building an empirical proxy $\hat{\Psi}_t$ of $\Psi$ with the feedback available at time $t$, and posting prices that maximize $\hat{\Psi}_t$.
By replacing the theoretical quantities by their empirical counterparts, for any $t \in \N$ and any $p \in [0,1]$, we can define an empirical proxy for $\Psi(p)$ as follows:
\[
    \hat{\Psi}_{t+1}(p) \coloneqq 2 \frac{1}{2t}\sum_{s=1}^{2t} \I\{V_s \le p\} \frac{1}{2t} \sum_{s=1}^{2t} \I\{p < V_s\} + \frac{1}{2t}\sum_{s=1}^{2t} \I\{V_s \le p\} \frac{1}{2t}\sum_{s=1}^{2t}\I\{V_s = p\} \;.
\]
This definition leads to \Cref{a:ftpsi}.

\begin{algorithm}
Post $P_1 = \frac{1}{2}$ and receive feedback $V_1,V_2$\;
\For
{%
    time $t=2,3,\dots$
}
{
    Post $P_t \in \argmax_{p \in [0,1]} \hat{\Psi}_t(p)$ and receive feedback $V_{2t-1}$, $V_{2t}$\;
}
\caption{Follow the Empirical $\Psi$ (\ftpsi)}
\label{a:ftpsi}
\end{algorithm}

We now state regret guarantees for \Cref{a:ftpsi}.
The proof of the following result (which hinges on showing that $\hat{\Psi}_t$ is \emph{uniformly} close to $\Psi$ with high probability) is deferred to \Cref{s:appe-t:ftpsi}.

\begin{theorem}
\label{t:ftpsi}
    For all time horizons $T\in \N$, the regret of \ftpsi{} satisfies
\[
    R_T
\le
    1 + 8 \sqrt{\pi} \cdot \sqrt{T-1}\;.
\]
\end{theorem}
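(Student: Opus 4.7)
The plan is to argue that $\hat\Psi_t$ concentrates uniformly around $\Psi$ with high probability, and then invoke the standard ``follow-the-leader over a surrogate'' argument to control the instantaneous regret. First, using that $P_t$ is $\sigma(V_1,\dots,V_{2(t-1)})$-measurable and hence independent of $(V_{2t-1},V_{2t})$, together with the generalization of the Median Lemma stated just before \Cref{a:ftpsi}, one has $\E[\GFT_t(P_t)] = \E[\Psi(P_t)]$ and $\E[\GFT_t(p)] = \Psi(p)$ for every fixed $p$. Letting $p^{\star}\in\argmax_{p\in[0,1]}\Psi(p)$ (which exists because $\Psi$ is piecewise-quadratic after conditioning on the jump set of $F$, and $[0,1]$ is compact), we get
\[
    R_T = \sum_{t=1}^T \E\bsb{\Psi(p^{\star})-\Psi(P_t)}\;.
\]

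Second, I would introduce the empirical cdf $F_n(p)\coloneqq \frac{1}{n}\sum_{s=1}^n\I\{V_s\le p\}$ and its left-continuous version $F_n^-(p)\coloneqq \frac{1}{n}\sum_{s=1}^n\I\{V_s< p\}$, so that
\[
    \Psi(p) = F(p)\brb{2-F(p)-F^-(p)},\qquad
    \hat\Psi_{t+1}(p) = F_{2t}(p)\brb{2-F_{2t}(p)-F_{2t}^-(p)}\;.
\]
The Dvoretzky--Kiefer--Wolfowitz inequality gives $\Pb[\sup_p|F_n-F|>\e] \le 2e^{-2n\e^2}$; applying it also to the empirical cdf of the reflected sample $-V_1,\dots,-V_n$ (equivalently, using that the class of sets $\{(-\infty,p):p\in\bbR\}$ is VC with dimension $1$) yields the analogous bound for $F_n^- \to F^-$. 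A short algebraic expansion of $\hat\Psi_{t+1}-\Psi$ as $(F_{2t}-F)(2-F_{2t}-F-F^-)-F_{2t}(F_{2t}^--F^-)$ (or, cleaner, writing $ab-AB=a(b-B)+B(a-A)$) then shows
\[
    \sup_{p\in[0,1]}\babs{\hat\Psi_{t+1}(p)-\Psi(p)} \le 2\sup_p\babs{F_{2t}(p)-F(p)} + \sup_p\babs{F_{2t}^-(p)-F^-(p)}\;,
\]
so a union bound yields $\sup_p|\hat\Psi_{t+1}-\Psi|\le 3\e$ with probability at least $1-4e^{-4t\e^2}$.

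Third, since $P_{t+1}\in\argmax\hat\Psi_{t+1}$, the usual telescoping gives the pointwise comparator bound
\[
    \Psi(p^{\star})-\Psi(P_{t+1})
\le
    \bsb{\Psi(p^{\star})-\hat\Psi_{t+1}(p^{\star})} + \bsb{\hat\Psi_{t+1}(P_{t+1})-\Psi(P_{t+1})}
\le 2\sup_p\babs{\hat\Psi_{t+1}(p)-\Psi(p)}\;,
\]
which, combined with the previous step, implies the tail bound $\Pb[\Psi(p^{\star})-\Psi(P_{t+1})>u]\le 4\,e^{-t u^2/9}$ for every $u\ge 0$. Integrating the tail (splitting the integral at the point where the exponential bound equals $1$, and using $\int_0^\infty e^{-t u^2/9}\dif u=\frac{3\sqrt\pi}{2\sqrt t}$) gives an expected per-round regret of order $C/\sqrt{t}$ for an explicit constant. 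Summing via $\sum_{t=1}^{T-1}t^{-1/2}\le 2\sqrt{T-1}$ delivers the $O(\sqrt{T-1})$ bound, with the leading constant tuned to $8\sqrt\pi$ by a careful choice of the splitting threshold.

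\paragraph{Main obstacle.}
The main technical point is the uniform concentration of $\hat\Psi_{t+1}$ to $\Psi$ without any regularity on $F$: once one realises that $F_n^-$ concentrates uniformly to $F^-$ at the DKW rate (despite $F^-$ not being a right-continuous cdf), the bilinear structure of $\Psi(p)=F(p)(2-F(p)-F^-(p))$ makes the $L^\infty$ closeness immediate, and the rest of the proof is the standard reduction from regret to empirical-process fluctuations.
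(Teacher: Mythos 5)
Your proposal is correct and follows essentially the same route as the paper's proof: uniform DKW-based concentration of $\hat{\Psi}_t$ around $\Psi$, the argmax comparison $\Psi(p^\star)-\Psi(P_{t+1})\le 2\sup_{p}|\hat{\Psi}_{t+1}(p)-\Psi(p)|$, tail integration, and summing $t^{-1/2}$. The only immaterial difference is that the paper avoids your second application of DKW (to $F_n^-$) by using that $\sup_p|F_n^-(p)-F^-(p)|\le\sup_p|F_n(p)-F(p)|$ (left limits of uniformly close monotone functions), which gives the per-round bound $4\sqrt{\pi}/\sqrt{t}$ directly without needing to split the tail integral to recover the constant $8\sqrt{\pi}$.
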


We conclude by showing that, without the Lipschitz cdf assumption, the $2$-bit feedback problem is unlearnable.
This can be deduced as a simple corollary of the proof of \Cref{t:lower-bound-2-bit-M}.
Specifically, we can obtain a linear worst-case lower bound for any $2$-bit feedback algorithm, even if we assume that the underlying distribution has a continuous cdf.
\begin{theorem}
    \label{t:lower-bound-2-bit-linear}
    There exist two numerical constants $c_1$ and $c_2$ such that, for any time horizon $T \ge c_2$, the worst-case regret of any $2$-bit feedback algorithm satisfies
    \[
        \sup_{\nu\in\cD_c} R_T^\nu
    \ge
       c_1 T \;,
    \]
    where $R_T^\nu$ is the regret at time $T$ of the algorithm when the i.i.d.\ sequence of traders' valuations follows the distribution $\nu$, and $\cD_c$ is the set of all distributions $\nu$ that admits a continuous cdf.
\end{theorem}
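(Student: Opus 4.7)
The plan is to derive Theorem~\ref{t:lower-bound-2-bit-linear} as an immediate corollary of Theorem~\ref{t:lower-bound-2-bit-M} by allowing the Lipschitz constant $M$ in the hard instance to grow exponentially with the time horizon $T$. The key observation is that every $M$-Lipschitz cdf is continuous, so $\cD_M \subseteq \cD_c$ for every $M \ge 1$, and consequently
$$\sup_{\nu \in \cD_c} R_T^\nu \;\ge\; \sup_{\nu \in \cD_M} R_T^\nu$$
for every $2$-bit-feedback algorithm, every $T$, and every admissible $M$. In other words, any lower bound proved against the Lipschitz subclass lifts verbatim to the continuous class, so it suffices to exploit the $M$-dependence already present in Theorem~\ref{t:lower-bound-2-bit-M}.

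Next I will choose $M$ as a function of $T$ so that the bound $c_1' \ln(MT)$ supplied by Theorem~\ref{t:lower-bound-2-bit-M} becomes linear. Letting $c_1'$ and $c_2'$ denote the numerical constants from that theorem, I will set $M_T := 2^{T/c_2'}$. This choice guarantees $c_2' \log_2(M_T) = T$, so the hypothesis $T \ge c_2' \log_2(M_T)$ holds (with equality), while $M_T \ge 16$ as soon as $T \ge 4 c_2'$. Invoking Theorem~\ref{t:lower-bound-2-bit-M} at $M = M_T$ then yields
$$\sup_{\nu \in \cD_c} R_T^\nu \;\ge\; \sup_{\nu \in \cD_{M_T}} R_T^\nu \;\ge\; c_1' \ln(M_T T) \;\ge\; c_1' \ln M_T \;=\; \frac{c_1' \ln 2}{c_2'}\,T,$$
giving the claim with $c_1 := c_1' \ln 2 / c_2'$ and $c_2 := 4 c_2'$.

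There is essentially no technical obstacle beyond bookkeeping with constants: all of the probabilistic and information-theoretic work has already been done inside the proof of Theorem~\ref{t:lower-bound-2-bit-M}. The only thing worth double-checking is that the hard family used there remains well-defined when $M$ is taken to be exponentially large in $T$, which is immediate since the construction is a density concentrated on a ``needle'' of width $1/M$ and yields a bona fide continuous cdf no matter how narrow the needle. Conceptually, the take-away is that continuity alone offers no information-theoretic traction in the $2$-bit feedback setting: the broker is still forced to binary-search for an arbitrarily narrow needle, and when that needle can be as thin as $2^{-\Theta(T)}$, $\Omega(T)$ queries are wasted before any informative signal is obtained, producing linear regret.
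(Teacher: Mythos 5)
Your proposal is correct and follows essentially the same route as the paper: both obtain the linear bound by instantiating the $\Omega(\ln M)$ part of the $2$-bit lower bound with $M$ exponential in $T$ and using that $M$-Lipschitz cdfs are continuous, so $\cD_M \subseteq \cD_c$. The only cosmetic difference is that the paper invokes the stronger intermediate claim inside the proof of \Cref{t:lower-bound-2-bit-M} (regret $\ge \frac{1}{4\ln 2}\ln M$ whenever $T \ge \log_2 M$), which lets it take $M = 2^T$ directly, whereas you apply the theorem statement as a black box and accordingly take $M = 2^{T/c_2'}$; both yield $\Omega(T)$ with valid constants.
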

\begin{proof}
As a consequence of the last part of the proof of \Cref{t:lower-bound-2-bit-M} (see \cref{s:appe:t:lower-bound-2-bit-M}) we have that, for any time horizon $T \ge 4$, if we set $M \coloneqq  2^T$, then the conditions $M \ge 16$ and $T \ge \log_2(M)$ in that proof holds, and hence, any $2$-bit feedback algorithm has worst-case regret that is at least $\frac{1}{4 \ln 2} \ln M = \frac{1}{4 \ln 2} T$.     
\end{proof}

\section{Conclusions and Open Problems}
Motivated by maximizing trading volume in OTC markets, we proposed a novel objective that departs from the classical \emph{gain-from-trade} reward studied in the bilateral trade literature.
For this new problem, we investigated optimal brokerage strategies from an online learning perspective.
Under the assumption that traders are free to sell or buy depending on the trading price and that traders' valuations form an i.i.d.\ sequence, we provided a complete picture with matching (up to, at most, logarithmic factors) upper and lower bounds in all the proposed settings, fleshing out the role of regularity assumptions in achieving these fast regret rates.

In addition to closing the logarithmic $\ln T$ gap in the regret rate of the $2$-bit feedback setting,
a few other future research directions are to find non-stationary variants of this problem where learning is still achievable, 
investigate trading volume maximization when traders have definite seller and buyer roles, 
and explore the contextual version of the problem when the broker has access to relevant side information before posting each price.

\section*{Acknowledgements}
RC is partially supported the MUR PRIN grant 2022EKNE5K (Learning in Markets and Society), funded by the NextGenerationEU program within the PNRR scheme, the FAIR (Future Artificial Intelligence Research) project, funded by the NextGenerationEU program within the PNRR-PE-AI scheme, the EU Horizon CL4-2022-HUMAN-02 research and innovation action under grant agreement 101120237, project ELIAS (European Lighthouse of AI for Sustainability).
TC gratefully acknowledges the support of the University of Ottawa through grant GR002837 (Start-Up Funds) and that of the Natural Sciences and Engineering Research Council of Canada (NSERC) through grants RGPIN-2023-03688 (Discovery Grants Program) and DGECR2023-00208 (Discovery Grants Program, DGECR - Discovery Launch Supplement).

\bibliographystyle{abbrv}
\bibliography{biblio}

\begin{thebibliography}{10}

\bibitem{archbold2023non}
T.~Archbold, B.~de~Keijzer, and C.~Ventre.
\newblock Non-obvious manipulability for single-parameter agents and bilateral trade.
\newblock In {\em Proceedings of the 2023 International Conference on Autonomous Agents and Multiagent Systems}, pages 2107--2115, USA, 2023. International Foundation for Autonomous Agents and Multiagent Systems.

\bibitem{azar2022alpha}
Y.~Azar, A.~Fiat, and F.~Fusco.
\newblock An alpha-regret analysis of adversarial bilateral trade.
\newblock {\em Advances in Neural Information Processing Systems}, 35:1685--1697, 2022.

\bibitem{babaioff2020bulow}
M.~Babaioff, K.~Goldner, and Y.~A. Gonczarowski.
\newblock Bulow-klemperer-style results for welfare maximization in two-sided markets.
\newblock In {\em Proceedings of the Thirty-First Annual ACM-SIAM Symposium on Discrete Algorithms}, SODA '20, page 2452–2471, USA, 2020. Society for Industrial and Applied Mathematics.

\bibitem{bernasconi2023no}
M.~Bernasconi, M.~Castiglioni, A.~Celli, and F.~Fusco.
\newblock No-regret learning in bilateral trade via global budget balance.
\newblock In {\em Proceedings of the 56th Annual ACM Symposium on Theory of Computing}, 2024.

\bibitem{BlumrosenM16}
L.~Blumrosen and Y.~Mizrahi.
\newblock Approximating gains-from-trade in bilateral trading.
\newblock In {\em Web and Internet Economics, {WINE}'16}, volume 10123 of {\em Lecture Notes in Computer Science}, pages 400--413, Germany, 2016. Springer.

\bibitem{bolic2023online}
N.~Boli\'{c}, T.~Cesari, and R.~Colomboni.
\newblock An online learning theory of brokerage.
\newblock In {\em Proceedings of the 23rd International Conference on Autonomous Agents and Multiagent Systems}, AAMAS '24, page 216–224, Richland, SC, 2024. International Foundation for Autonomous Agents and Multiagent Systems.

\bibitem{brustle2017approximating}
J.~Brustle, Y.~Cai, F.~Wu, and M.~Zhao.
\newblock Approximating gains from trade in two-sided markets via simple mechanisms.
\newblock In {\em Proceedings of the 2017 ACM Conference on Economics and Computation}, EC '17, page 589–590, New York, NY, USA, 2017. Association for Computing Machinery.

\bibitem{cesa2023bilateral}
N.~Cesa-Bianchi, T.~Cesari, R.~Colomboni, F.~Fusco, and S.~Leonardi.
\newblock Bilateral trade: A regret minimization perspective.
\newblock {\em Mathematics of Operations Research}, 49(1):171--203, 2024.

\bibitem{cesa2024regret}
N.~Cesa-Bianchi, T.~Cesari, R.~Colomboni, F.~Fusco, and S.~Leonardi.
\newblock Regret analysis of bilateral trade with a smoothed adversary.
\newblock Technical report, hal preprint hal-04383576, 2024.

\bibitem{cesa2021regret}
N.~Cesa-Bianchi, T.~R. Cesari, R.~Colomboni, F.~Fusco, and S.~Leonardi.
\newblock A regret analysis of bilateral trade.
\newblock In {\em Proceedings of the 22nd ACM Conference on Economics and Computation}, pages 289--309, USA, 2021. Association for Computing Machinery.

\bibitem{cesa2023repeated}
N.~Cesa-Bianchi, T.~R. Cesari, R.~Colomboni, F.~Fusco, and S.~Leonardi.
\newblock Repeated bilateral trade against a smoothed adversary.
\newblock In {\em The Thirty Sixth Annual Conference on Learning Theory}, pages 1095--1130, USA, 2023. PMLR, PMLR.

\bibitem{cesari2021nearest}
T.~R. Cesari and R.~Colomboni.
\newblock A nearest neighbor characterization of {L}ebesgue points in metric measure spaces.
\newblock {\em Mathematical Statistics and Learning}, 3(1):71--112, 2021.

\bibitem{Colini-Baldeschi16}
R.~Colini{-}Baldeschi, B.~de~Keijzer, S.~Leonardi, and S.~Turchetta.
\newblock Approximately efficient double auctions with strong budget balance.
\newblock In {\em {ACM-SIAM} Symposium on Discrete Algorithms, {SODA}'16}, pages 1424--1443, USA, 2016. {SIAM}.

\bibitem{Colini-Baldeschi17}
R.~Colini{-}Baldeschi, P.~W. Goldberg, B.~de~Keijzer, S.~Leonardi, and S.~Turchetta.
\newblock Fixed price approximability of the optimal gain from trade.
\newblock In {\em Web and Internet Economics, {WINE}'17}, volume 10660 of {\em Lecture Notes in Computer Science}, pages 146--160, Germany, 2017. Springer.

\bibitem{colini2020approximately}
R.~Colini-Baldeschi, P.~W. Goldberg, B.~d. Keijzer, S.~Leonardi, T.~Roughgarden, and S.~Turchetta.
\newblock Approximately efficient two-sided combinatorial auctions.
\newblock {\em ACM Transactions on Economics and Computation (TEAC)}, 8(1):1--29, 2020.

\bibitem{DengMSW21}
Y.~Deng, J.~Mao, B.~Sivan, and K.~Wang.
\newblock Approximately efficient bilateral trade.
\newblock In {\em {STOC}}, pages 718--721, Italy, 2022. {ACM}.

\bibitem{dutting2021efficient}
P.~D\"{u}tting, F.~Fusco, P.~Lazos, S.~Leonardi, and R.~Reiffenh\"{a}user.
\newblock Efficient two-sided markets with limited information.
\newblock In {\em Proceedings of the 53rd Annual ACM SIGACT Symposium on Theory of Computing}, STOC 2021, page 1452–1465, New York, NY, USA, 2021. Association for Computing Machinery.

\bibitem{bis2023}
B.~for International~Settlements.
\newblock {OTC} derivatives statistics at end-{J}une 2022.
\newblock {\em Bank for International Settlements}, 2022.

\bibitem{kang2022fixed}
Z.~Y. Kang, F.~Pernice, and J.~Vondr{\'a}k.
\newblock Fixed-price approximations in bilateral trade.
\newblock In {\em Proceedings of the 2022 Annual ACM-SIAM Symposium on Discrete Algorithms (SODA)}, pages 2964--2985, Alexandria, VA, USA, 2022. SIAM, Society for Industrial and Applied Mathematics.

\bibitem{krichene2015hedge}
W.~Krichene, M.~Balandat, C.~Tomlin, and A.~Bayen.
\newblock The {H}edge algorithm on a continuum.
\newblock In {\em International Conference on Machine Learning}, pages 824--832. PMLR, 2015.

\bibitem{MaillardM10}
O.~Maillard and R.~Munos.
\newblock Online learning in adversarial lipschitz environments.
\newblock In {\em {ECML/PKDD} {(2)}}, volume 6322 of {\em Lecture Notes in Computer Science}, pages 305--320. Springer, 2010.

\bibitem{massart1990tight}
P.~Massart.
\newblock The tight constant in the dvoretzky-kiefer-wolfowitz inequality.
\newblock {\em The annals of Probability}, pages 1269--1283, 1990.

\bibitem{myerson1983efficient}
R.~B. Myerson and M.~A. Satterthwaite.
\newblock Efficient mechanisms for bilateral trading.
\newblock {\em Journal of economic theory}, 29(2):265--281, 1983.

\bibitem{revuz2013continuous}
D.~Revuz and M.~Yor.
\newblock {\em Continuous martingales and Brownian motion}, volume 293.
\newblock Springer Science \& Business Media, 2013.

\bibitem{sherstyuk2020randomized}
K.~Sherstyuk, K.~Phankitnirundorn, and M.~J. Roberts.
\newblock Randomized double auctions: gains from trade, trader roles, and price discovery.
\newblock {\em Experimental Economics}, 24(4):1--40, 2020.

\bibitem{stein1970singular}
E.~M. Stein.
\newblock {\em Singular integrals and differentiability properties of functions}.
\newblock Princeton university press, 1970.

\bibitem{weill2020search}
P.-O. Weill.
\newblock The search theory of over-the-counter markets.
\newblock {\em Annual Review of Economics}, 12:747--773, 2020.

\end{thebibliography}


\appendix

\section{Proof of Theorem \ref{t:lower-bound-full-M}}
\label{s:appe-lower-bound-full-M}

For each $\e \in [0,1]$, consider the following density function (see \cref{f:lower-bound-density}, left)
\[
    f_\e \colon [0,1] \to [0,2], \qquad x \mapsto  2\e \I\lcb{x \le \frac{1}{8}} + \I\lcb{\frac{1}{8}<x< \frac{7}{8}} + 2(1-\e) \I\lcb{x \ge \frac{7}{8}} \;,
\]
Notice that, for each $\e \in [0,1]$ the cumulative function associated to the density $f_\e$ is $2$-Lipschitz with explicit expression given by 
\[
    F_\e \colon [0,1] \to [0,1],\quad  x \mapsto 2\e x\I\lcb{x \le \frac{1}{8}} + \lrb{\frac{2\e-1}{8}+x} \I\lcb{\frac{1}{8}<x< \frac{7}{8}} + \brb{2\e - 1 - 2(\e-1)x }\I\lcb{x\ge\frac{7}{8}}\;.
\]
Consider for each $\e \in [0,1]$, an i.i.d.\ sequence $(B_{\e,t})_{t \in \N}$ of Bernoulli random variables of parameter $\e$, an i.i.d.\ sequence $(D_t)_{t\in\N}$ of Bernoulli random variables of parameter $\frac{1}{4}$, an i.i.d.\ sequence $(U_t)_{t \in \N}$ of uniform random variables on $[0,1]$, and a uniform random variable $E$ on $[0,1]$, such that $\lrb{(B_{\e,t})_{t \in \N, \e \in [0,1]} , (D_t)_{t \in \N},(U_t)_{t \in \N}, E}$ is an independent family. 
For each $\e \in [0,1]$ and $t \in \N$, define
    \begin{equation}
        V_{\e,t}
    \coloneqq
        D_t\cdot\lrb{B_{\e,t}\frac{U_t}{8}+(1-B_{\e,t})\frac{7+U_t}{8}} + (1-D_t)\cdot\lrb{\frac{1}{8}+\frac{3}{4}U_t} \;.
    \label{e:representation_of_V}
    \end{equation}
Tedious but straightforward computations show that, for each $\e \in [0,1]$ the sequence $(V_{\e,t})_{t \in \N}$ is i.i.d.\  with common density given by $f_\e$, and this sequence is independent of $E$.
For any $\e\in[0,1]$, $p\in[0,1]$, and $t\in \N$, let $\GFT_{\e,t}(p) \coloneqq \gft(p,V_{\e,2t-1}, V_{\e,2t})$ (for a qualitative representation of its expectation, see \cref{f:lower-bound-density}, right).
We now show how to lower bound the worst-case regret of any arbitrary deterministic algorithm for the full-feedback setting $(\alpha_t)_{t\in \N}$, i.e., a sequence of functions $\alpha_t \colon \brb{[0,1]\times[0,1]}^{t-1} \to [0,1]$ where each element maps past feedback into a price (with the convention that $\alpha_1$ is a number in $[0,1]$).
We remark that we do not lose any generality in considering only deterministic algorithms given that we are in a stochastic i.i.d.\ setting, and the minimax regret over deterministic algorithms coincides with that over randomized algorithms.
For each $t \in \N$, define $\tilde{\alpha}_t \colon \brb{[0,1]\times[0,1]}^{t-1} \to \lsb{\frac{1}{8},\frac{7}{8}}$ equal to $\alpha_t$ whenever $\alpha_t$ takes values in $\lsb{\frac{1}{8},\frac{7}{8}}$, and equal to $1/2$ otherwise.
Notice that for each $\e \in [0,1]$ it holds that $(F_\e \circ \tilde{\alpha}_t) \cdot \lrb{1-F_\e \circ \tilde{\alpha}_t} \ge (F_\e \circ \alpha_t) \cdot \lrb{1-F_\e \circ \alpha_t}$, and hence, due to \cref{l:mysticus}, for each $t \in \N$, it holds that $\E\lsb{ \GFT_{\e,t}\brb{ \tilde{\alpha}_t( V_{\e,1}, \dots, V_{\e,2(t-1)} ) } } \ge \E\lsb{ \GFT_{\e,t}\brb{ \alpha_t( V_{\e,1}, \dots, V_{\e,2(t-1)} ) } }$.
Notice also that for each $\e \in [0,1]$, we have that $m_\e \coloneqq \frac{5-2\e}{8}$ is the unique element in $[0,1]$ such that $F_\e(m_\e)=1/2$. 
For any time horizon $T \ge 144$, we have that the worst-case regret of the algorithm $(\alpha_t)_{t \in \N}$ can be lower bounded as follows
{%
    \small
    \allowdisplaybreaks%
    \begin{align*}
        &\sup_{\nu\in\cD_M} R_T^\nu
    \ge
        \sup_{\e\in[0,1]} 
        \sum_{t=13}^T \E\Bsb{ \GFT_{\e,t}(m_\e) - \GFT_{\e,t}\brb{ \alpha_t( V_{\e,1}, \dots, V_{\e,2(t-1)} ) } }
    \\
    &\quad\ge
        \sup_{\e\in[0,1]} 
        \sum_{t=13}^T \E\Bsb{ \GFT_{\e,t}(m_\e) - \GFT_{\e,t}\brb{ \tilde{\alpha}_t( V_{\e,1}, \dots, V_{\e,2(t-1)} ) } }
    \overset{\spadesuit}=
        \sup_{\e\in[0,1]} 
        \sum_{t=13}^T \E\lsb{ 2\lrb{\frac{1}{2} - F_\e \brb{\tilde{\alpha}_t( V_{\e,1}, \dots, V_{\e,2(t-1)} )}}^2 } 
    \\
    &\quad\overset{\circ}\ge
        \sum_{t=13}^T \E\lsb{ 2\lrb{\frac{1}{2} - F_E \brb{\tilde{\alpha}_t( V_{E,1}, \dots, V_{E,2(t-1)} )}}^2 }
    \overset{\clubsuit}=
        2\sum_{t=13}^T \E\lsb{ \lrb{\frac{5-2E}{8} - \alpha_t( V_{E,1}, \dots, V_{E,2(t-1)} ) }^2 }
    \\
    &\quad
    \overset{\varheartsuit}\ge
        2\sum_{t=13}^T \E\lsb{ \lrb{\frac{5-2E}{8} - \E\lsb{\frac{5-2E}{8} \mid  B_{E,1}, \dots, B_{E,2(t-1)}, D_1,\dots,D_{2(t-1)},U_1,\dots,U_{2(t-1)}}}^2 }
    \\
    &\quad\overset{\vardiamondsuit}=
        2\sum_{t=13}^T \E\lsb{ \lrb{\frac{5-2E}{8} - \E\lsb{\frac{5-2E}{8} \mid  B_{E,1}, \dots, B_{E,2(t-1)}}}^2 }
    =    
        \frac{1}{8}\sum_{t=13}^T \E\lsb{ \lrb{E - \E\lsb{E \mid  B_{E,1}, \dots, B_{E,2(t-1)}}}^2 }
    \\
    &\quad
    \overset{*}=    
        \frac{1}{8}\sum_{t=13}^T \E\lsb{ \lrb{E - \frac{\sum_{s=1}^{2(t-1)}B_{E,s}+1}{2t}}^2 }
    =    
        \frac{1}{8}\sum_{t=13}^T \int_0^1\E\lsb{ \lrb{\e - \frac{\sum_{s=1}^{2(t-1)}B_{\e,s}+1}{2t}}^2 } \dif \e
    \\
    &\quad
    =
        \frac{1}{8}\sum_{t=13}^T \int_0^1\E\lsb{ \lrb{\e - \frac{\sum_{s=1}^{2(t-1)}B_{\e,s}}{2(t-1)}+\frac{\sum_{s=1}^{2(t-1)}B_{\e,s}}{2(t-1)}-\frac{\sum_{s=1}^{2(t-1)}B_{\e,s}}{2t} -\frac{1}{2t}}^2 } \dif \e
    \\
    &\quad
    =
        \frac{1}{8}\sum_{t=13}^T \int_0^1\E\lsb{ \lrb{\e - \frac{\sum_{s=1}^{2(t-1)}B_{\e,s}}{2(t-1)}+\frac{1}{2t(t-1)}\sum_{s=1}^{2(t-1)}B_{\e,s} -\frac{1}{2t}}^2 } \dif \e
    \\
    &\quad
    \ge
        \frac{1}{8}\sum_{t=13}^T \int_0^1\E\lsb{ \lrb{\e - \frac{\sum_{s=1}^{2(t-1)}B_{\e,s}}{2(t-1)}}^2-2\labs{\e - \frac{\sum_{s=1}^{2(t-1)}B_{\e,s}}{2(t-1)}}\labs{\frac{1}{2t(t-1)}\sum_{s=1}^{2(t-1)}B_{\e,s} -\frac{1}{2t}} } \dif \e
    \\
    &\quad
    \ge
        \frac{1}{8}\sum_{t=13}^T \int_0^1\E\lsb{ \lrb{\e - \frac{\sum_{s=1}^{2(t-1)}B_{\e,s}}{2(t-1)}}^2-\frac{1}{t}\labs{\e - \frac{\sum_{s=1}^{2(t-1)}B_{\e,s}}{2(t-1)}} } \dif \e
    \\
    &\quad
    \overset{\star}\ge
        \frac{1}{8}\sum_{t=13}^T \int_0^1 \lrb{\frac{\mathrm{Var}(B_{\e,1})}{2(t-1)}-\frac{1}{t}\sqrt{\frac{\mathrm{Var}(B_{\e,1})}{2(t-1)}} }  \dif \e
    =
        \frac{1}{8}\sum_{t=13}^T \int_0^1 \lrb{\frac{\e(1-\e)}{2(t-1)}-\frac{1}{t}\sqrt{\frac{\e(1-\e)}{2(t-1)}} }  \dif \e
    \\
    &\quad
    =
        \frac{1}{8}\sum_{t=13}^T \lrb{\frac{1}{12(t-1)}-\frac{\pi}{8t\sqrt{2(t-1)}} }
    \ge
        \frac{1}{8}\lrb{\frac{1}{12}-\frac{\pi}{16\sqrt{6}} }\sum_{t=12}^{T-1}\frac{1}{t}
    \ge
        \frac{1}{8}\lrb{\frac{1}{12}-\frac{\pi}{16\sqrt{6}} } \int_{12}^T \frac{1}{s}\dif s
    \\
    &\quad=
        \frac{1}{8}\lrb{\frac{1}{12}-\frac{\pi}{16\sqrt{6}} } \ln\lrb{\frac{T}{12}}
    \ge
        \frac{1}{16}\lrb{\frac{1}{12}-\frac{\pi}{16\sqrt{6}} } \ln(T) \;.
    \end{align*}
    }%
where ``$\spadesuit$'' follows from \cref{l:mysticus}; ``$\circ$'' follows from the fact that $E$ and $V_{\e,1},\dots,V_{\e,2(t-1)}$ are independent of each other; ``$\clubsuit$'' follows from the fact that $\alpha_t$ takes values in $\lsb{\frac{1}{8},\frac{7}{8}}$ and the explicit formula of $F_\e$ in that interval for any $\e \in [0,1]$;  ``$\varheartsuit$'' follows from the fact that $\alpha_t( V_{E,1}, \dots, V_{E,2(t-1)} )$ is $\cF_t \coloneqq \sigma(B_{E,1}, \dots, B_{E,2(t-1)}, D_1,\dots,D_{2(t-1)},U_1,\dots,U_{2(t-1)})$-measurable and that, for any $Y$ the minimizer in $L^2(\cF_t)$ of the functional $X \mapsto \E\lsb{(Y-X)^2}$ is $X = \E[Y \mid \cF_t]$; ``$\vardiamondsuit$'' follows from the fact that $E$ and $(D_1,\dots,D_{2(t-1)},U_1,\dots,U_{2t-1})$ are independent of each other; ``$*$'' follows from the fact $E \mid B_{E,1},\dots,B_{E,2(t-1)}$ has a beta distribution; and ``$\star$'' follows from the fact that $B_{\e,1},B_{\e,2},\dots$ is an i.i.d.\ Bernoulli process of parameter $\e$, together with Jensen's inequality.

\section{Proof of Theorem \ref{t:mbs}}
\label{s:appe:t:mbs}

Without loss of generality, we can (and do!) assume that $T \ge 2$, and so $\log_2(MT) \ge 1$.
First, let $\tau_T$ be the final value of $\tau$ if the algorithm ends at time $T$ without a break, or define it as $\tau-1$ if it ends with a break.
For each $\tau \in [\tau_T]$, we define the epoch $\tau$ as the collection of rounds from $t_{\tau-1}+1$ to $t_{\tau}$.
Notice that, for each $\tau \in [\tau_T]$, we have that $s_{\tau}$ is the number of bits collected during the epoch $\tau$.
Let $Q_1^\star \coloneqq 1/2$, and define by induction $Q_{\tau+1}^\star$ as $Q_{\tau}^\star + \frac{1}{2^{\tau+1}}$ if $F(Q_{\tau}^\star) < 1/2$, as $Q_{\tau}^\star - \frac{1}{2^{\tau+1}}$ if $F(Q_{\tau}^\star) > 1/2$, or as $Q_{\tau}^\star$ if $F(Q_{\tau}^\star) = 1/2$.
If there is $\tau \in \N$ such that $F(Q_\tau^\star) = 1/2$, let $m \coloneqq Q_\tau^\star$.
Otherwise, let $m \in [0,1]$ be such that $F(m) = 1/2$ (its existence has already been pointed out after \Cref{l:mysticus}).
Crucially, notice that for each $\tau \in \N$, we have that $|m-Q_\tau^\star| \le 2^{-\tau}$.

Let $(V_{x,k})_{x\in [0,1], k\in \N}$ be an independent family of random variables with common distribution given by $\nu$, and for each $x \in [0,1]$ and $t \in \N$, define $N_{t}(x) \coloneqq 2 \cdot \sum_{k=1}^{t-1} \I\{P_k = x\}$.
Notice that without loss of generality, we can assume that for each $t \in \N$ it holds that $V_{2t-1} \coloneqq V_{P_t,N_{t}(P_t)+1}$ and $V_{2t} \coloneqq V_{P_t,N_{t}(P_t)+2}$.
Define the ``good'' event
\[
    \cE
\coloneqq
    \bigcap_{i=1}^T \bigcap_{\substack{j=1 \\j \text{ even}}}^{T} \lcb{ \labs{\frac{1}{j} \sum_{k=1}^{j} \I\{V_{Q^\star_i,k} \le Q^\star_i\}  - F(Q^\star_i)} < \sqrt{\frac{\ln(2/\delta)}{2j}} }\;,
\]
and notice that by De Morgan's laws, a union bound, and Hoeffding's inequality, we can upper bound the probability of the ``bad'' event $\cE^c$ by $\Pb[\cE^c] \le \delta T^2 $.
Notice that for each $i,j \in [T]$ with $F(Q_i^\star) \neq \frac{1}{2}$ and $j$ even satisfying $j \ge \frac{2 \ln(2/\delta)}{\lrb{\frac{1}{2}-F(Q_i^\star)}^2}$, then, whenever we are in the good event $\cE$, we have that
\[
    \frac{1}{j}\sum_{k=1}^{j} \I\{V_{Q_i^\star,k} \le Q_i^\star\} + \sqrt{\frac{\ln(2/\delta)}{2j}}
<
    F(Q_i^\star) + \sqrt{\frac{2\ln(2/\delta)}{j}}
\le
    \frac{1}{2}\;,
\]
whenever $F(Q_i^\star) < 1/2$, while
\[
  \frac{1}{j}\sum_{k=1}^{j} \I\{V_{Q_i^\star,k} \le Q_i^\star\} - \sqrt{\frac{\ln(2/\delta)}{2j}}
>
    F(Q_i^\star) - \sqrt{\frac{2\ln(2/\delta)}{j}}
\ge
    \frac{1}{2}\;.
\]
whenever $F(Q_i^\star) > 1/2$.
Instead, if $i,j \in [T]$ with $F(Q_i^\star) = \frac{1}{2}$ and $j$ is even, we have that
\[
    \frac{1}{j}\sum_{k=1}^{j} \I\{V_{Q_i^\star,k} \le Q_i^\star\} + \sqrt{\frac{\ln(2/\delta)}{2j}}
\ge
    F(Q_i^\star)
=
    \frac{1}{2}
\]
and analogously
\[
    \frac{1}{j}\sum_{k=1}^{j} \I\{V_{Q_i^\star,k} \le Q_i^\star\} - \sqrt{\frac{\ln(2/\delta)}{2j}}
\le
    F(Q_i^\star)
=
    \frac{1}{2}\;.
\]
In particular, if we are in the good event $\cE$, these inequalities imply on the one hand that $Q_1 = Q_1^\star, \dots,  Q_{\tau_T} = Q_{\tau_T}^\star$ and, if $\tau \in [\tau_T]$ is such that $F(Q_{\tau}^\star) = 1/2$, then $\tau = \tau_T$.
On the other hand, if $\tau \in [\tau_T]$ is such that $F(Q_\tau^\star) \neq 1/2$ and we are in the good event $\cE$, they imply that the number of bits $s_\tau$ collected during the epoch $\tau$ cannot be greater than $\frac{2 \ln(2/\delta)}{\lrb{\frac{1}{2}-F(Q_\tau^\star)}^2}$, because the condition that ends the epoch $\tau$ with a break is met by the time that we have collected $\frac{2 \ln(2/\delta)}{\lrb{\frac{1}{2}-F(Q_\tau^\star)}^2}$ bits in that epoch.

Define $\tau_T^\# \coloneqq \lceil \log_2(MT)\rceil$, define $\tau_T^\flat$ as the smallest $\tau \in \N$ such that $F(Q^\star_\tau) = 1/2$ if it exists, and $+\infty$ otherwise, and define $\tau^\star_T \coloneqq \min(\tau^\#_T, \tau^\flat_T,\tau_T)$.
In what follows, when we are in the event $\tau^\#_T  > \max(\tau^\flat_T,\tau_T)$, we use the convention that any summation of the form $\sum_{\tau = \tau^\star_T+1}^{\tau_T}$ is zero by definition.
For each $t \in [T]$, define $\cH_t \coloneqq \sigma(V_1,\dots,V_{2t-2})$ as the $\sigma$-algebra generated by the history observed before time $t$.
We can control the regret in the following way
\begin{align*}
    R_T
&=
    \sum_{t=1}^T \E \bsb{ G_t(m) - G_t(P_t) }
=
    \sum_{t=1}^T \E\Bsb{\E \bsb{ G_t(m) - G_t(P_t) \mid \cH_t }}
\\
&\overset{\spadesuit}=
    \sum_{t=1}^T \E\bbsb{\Bsb{\E \bsb{ G_t(m) - G_t(p) }}_{p=P_t} }
\overset{\clubsuit}=
    2\cdot \sum_{t=1}^T \E\lsb{\lrb{\frac{1}{2}-F(P_t)}^2}
\\
&\le
    2 \cdot \sum_{t=1}^T \E \lsb{ \lrb{\frac{1}{2}-F(P_t)}^2 \I_\cE } + \frac{T}{2}\cdot \Pb[\cE^c]
\\
&=
    \E \lsb{ \sum_{\tau=1}^{\tau^\star_T-1}  s_\tau \cdot \lrb{\frac{1}{2}-F(Q_\tau^\star)}^2 \I_\cE }
    + 
    \E \lsb{ \sum_{\tau=\tau^\star_T}^{\tau_T} s_\tau \cdot \lrb{\frac{1}{2}-F(Q_\tau^\star)}^2 \I_\cE }
    +
    \frac{T}{2}\cdot \Pb[\cE^c]
\\
&\overset{\varheartsuit}\le
    \E \lsb{ \sum_{\tau=1}^{\tau^\star_T - 1}  \frac{2 \ln(2/\delta)}{\lrb{\frac{1}{2}-F(Q_\tau^\star)}^2} \cdot \lrb{\frac{1}{2}-F(Q_\tau^\star)}^2 \I_\cE }
    + 
    \E \lsb{ \sum_{\tau=\tau^\star_T}^{\tau_T} s_\tau \cdot M^2 \cdot |m-Q^\star_\tau|^2 }
    +
    \frac{T}{2}\cdot \Pb[\cE^c]
\\
&\le
     (\tau^\#_T-1) \cdot 2\cdot\ln(2/\delta)
    + 
    T \cdot M^2 \cdot 2^{-2\tau^\#_T}
    +
    \delta \cdot \frac{T^3}{2}
\le
    2 + 6 \log_2(MT)\ln(T)\;,
\end{align*}
where in $\spadesuit$ we used the Freezing Lemma (see, e.g.,\cite[Lemma~8]{cesari2021nearest}), in $\clubsuit$ we used \Cref{l:mysticus}, and in $\varheartsuit$ we used that fact that $F(m) = 1/2$ and $F$ is $M$-Lipschitz.

\section{Proof of Theorem \ref{t:lower-bound-2-bit-M}}
\label{s:appe:t:lower-bound-2-bit-M}

We already know that algorithms that have access to full-feedback have to suffer worst-case regret of at least $c_1 \ln T$ if $T \ge c_2$, where $c_1$ and $c_2$ are the constants in the statement of \Cref{t:lower-bound-full-M}. 
In particular, the same statement holds \emph{a fortiori} for any $2$-bit feedback algorithm, given that any $2$-bit feedback algorithm can be trivially converted into an algorithm operating with full-feedback.
It follows that it is enough to prove that there exist two universal constants $\tilde{c}_1$ and $\tilde{c}_2$ such that the worst-case regret of any $2$-bit feedback algorithm is at least $\tilde{c}_1 \ln M$ whenever $T \ge \tilde{c}_2 \log_2(M)$.
In fact, in this case, we can set $\bar c_1 \coloneqq \frac{1}{2}\min(c_1,\tilde{c}_1)$ and $\bar c_2 \coloneqq \max(c_2,\tilde{c}_2)$ to obtain that the worst-case regret of any $2$-bit feedback algorithm is at least $2 \bar c_1 \max(\ln T,\ln M) \ge \bar c_1 \ln (MT)$ whenever $T \ge \bar c_2 \log_2 M$.

We now prove the existence of $\tilde{c}_1$ and $\tilde{c}_2$.
Let $n \in \N$ be the greatest integer such that $2^n \le M$ and consider the elements $\nu_k \in \cD_M$ whose density is $2^n \cdot \I_{(\frac{k-1}{2^n},\frac{k}{2^n})}$ for some $k \in [2^n]$, and notice that the corresponding cdfs are $M$-Lipschitz.

Consider the following surrogate game.
The adversary secretly chooses $k^\star \in [2^n]$. The player action space is $[2^n]$.
The surrogate game ends the first time $t \in \N$ when the player plays $I_t = k^\star$.
Before that, if the player plays $I_t \neq k^\star$, the player suffers a loss $1/2$ and receives $\I\{ I_t \le k^\star\}$ as feedback.
Now, note that we can convert any algorithm $\alpha$ for the $2$-bit feedback problem into an algorithm $\tilde{\alpha}$ for the surrogate game in the following way.
For each $k \in [2^n-1]$, define $J_k \coloneqq [(k-1)2^{-n},k2^{-n})$ and $J_{2^n} \coloneqq [(2^n-1)2^{-n},1]$. Whenever the algorithm $\alpha$ plays $P_t \in J_k$, the algorithm $\tilde{\alpha}$ plays $I_t \coloneqq k$ and passes $\brb{\I\{I_t \le k^\star\},\I\{I_t \le k^\star\}}$ to $\alpha$, where $k^\star$ is the underlying instance of the surrogate game.
Now, notice that we can map every instance $k^\star \in [2^n]$ for the surrogate game into the instance $\nu_{k^\star} \in \cD_M$ of the original problem and that the regret of the algorithm $\alpha$ on the instance $\nu_{k^\star}$ is greater than or equal to than the regret of the algorithm $\tilde{\alpha}$ on the instance $k^\star$.
It follows that a worst-case regret lower bound for the surrogate game is also a worst-case regret lower bound for the original problem.

Fix an algorithm $\alpha$ for the surrogate game.
Given that the surrogate game is deterministic, without any loss of generality we can assume that $\alpha$ is deterministic.
We say that $S \subset [2^n]$ is a discrete segment if $S$ is of the form $\{k \in [2^n] \mid a \le k \le b\}$ for some $a,b \in [2^n]$ with $a<b$.
We can prove the following property by induction on $t = 0,1,\dots,n-1$:
there is a discrete segment $J_t$ with at least $2^{n-t}-1$ elements such that, for each $k,k' \in S_t$, the algorithm has not won the game by the time $t$ and receives the same feedback (and hence selects the same actions) if the underlying instance is $k$ or $k'$.
For $t = 0$ the property is true by setting $S_0 \coloneqq [2^n]$.
Assume that the property is true for some $t \in \{0,1,\dots, n-2\}$.
Assume that $a,b \in [2^n]$ with $a \le b$ are such that $S_t = \{k \in [2^n] \mid a \le k \le b\}$, where $S_t$ is a segment that enjoys the property.
Now, if the algorithm plays $I_{t+1} \notin S_t$ we can set $S_{t+1} \coloneqq S_t$, and we see that the required properties hold trivially.
Instead, if $I_{t+1} \in S_t$ we set $S_{t+1} \coloneqq \{k \in [2^n] \mid I_{t}+1 \le k \le b\}$ if $I_{t+1} < \frac{a+b}{2}$ and we set $S_{t+1} \coloneqq \{ k \in [2^n] \mid a \le k \le I_{t}-1\}$ if $I_{t+1} \ge \frac{a+b}{2}$.
Notice that given that $S_t$ has at least $2^{n-t}-1$ points, we have that $S_{t+1}$ contains at least $\frac{2^{n-t}-2}{2} = 2^{n-(t+1)}-1$ points and, for each $k \in S_{t+1}$, the game does not end by the time $t+1$.
Hence the induction step is proved.
It follows that $S_{n-1}$ is non-empty and, if we pick $k^\star \in S_{n-1}$, the game goes on at least up to time $n-1$ whenever the time horizon $T$ is at least $n-1$.
Hence, if $T \ge \log_2(M)$ (which implies in particular that $T \ge n-1$), the worst-case regret of the algorithm $\alpha$ is at least $\frac{n-1}{2} = \frac{n+1}{2} - 1 \ge \frac{\log_2(M)}{2} - 1 \ge \frac{\log_2(M)}{4} = \frac{1}{4 \ln (2)} \ln M$, where in the last inequality we used $M \ge 16$.
Hence, we can pick $\tilde{c}_1 \coloneqq \frac{1}{4 \ln 2}$ and $\tilde{c}_2 \coloneqq 1$, concluding the proof.

\section{Proof of Theorem \ref{t:ftpsi}}
\label{s:appe-t:ftpsi}

    For any $t \in \N$, tedious but straightforward computations show that
\begin{align*}
    \Pb\lsb{\sup_{p \in [0,1]} \labs{ \Psi(p) - \hat{\Psi}_t(p)} \ge \e}
\le
     \Pb \lsb{ \sup_{p \in \bbR} \labs{ \frac{1}{2t} \sum_{s=1}^{2t} \I\{V_s \le p \} - F(p) } \ge \frac{\e}{4}}
\le
    2 \exp\lrb{- \frac{1}{4} \e^2 t}\;,
\end{align*}
where the last inequality follows from the DKW inequality \cite{massart1990tight}.
Let $p^\star \in \argmax_{p\in [0,1]} \Psi(p)$ (which does exist due to the upper-semicontinuity of $\Psi$).
Then, for any $t \in \N$, we have that
\begin{align*}
    \E\lsb{ \Psi(p^\star) - \Psi(P_{t+1}'') }
&=  
    \E\lsb{ \Psi(p^\star) - \hat{\Psi}_t(p^\star) }
    +
    \E\bsb{ \underbrace{\hat{\Psi}_t(p^\star) - \hat{\Psi}_t(P_{t+1}'')}_{\le 0}  }
    +
    \E\lsb{ \hat{\Psi}_t(P_{t+1}'') - \Psi(P_{t+1}'') }
\\
&\le
    2\E\lsb{ \sup_{p \in [0,1]} \labs{\Psi(p) - \hat{\Psi}_t(p)}}
=
    2\int_0^{+\infty} \Pb \lsb{ \sup_{p \in [0,1]} \labs{\Psi(p) - \hat{\Psi}_t(p)} \ge \e} \dif \e
\\
&\le
    2\int_0^{+\infty} 2 \exp\lrb{ - \frac{1}{4} \e^2 t } \dif \e
=
    \frac{4\sqrt{\pi}}{\sqrt{t}}\;.
\end{align*}
Hence
\[
    R_T
\le
    1
    +
    \E\lsb{ \sum_{t=2}^T \brb{\Psi(p^\star) -\Psi(P_t'')} }
\le
    1
    +
    4 \sqrt{\pi}\sum_{t=1}^{T-1} \frac{1}{\sqrt{t}}
\le
    1
    +
    8 \sqrt{\pi} \cdot \sqrt{T-1}\;. \qedhere
\]

\end{document}